\documentclass[preprint,11pt]{elsarticle}

\makeatletter
\def\ps@pprintTitle{%
 \let\@oddhead\@empty
 \let\@evenhead\@empty
 \def\@oddfoot{}%
 \let\@evenfoot\@oddfoot}
\makeatother

\usepackage[titletoc,title]{appendix}

\makeatletter
  \@addtoreset{chapter}{part}
  \@addtoreset{@ppsaveapp}{part}
\makeatother

\usepackage{amsmath,amsthm}
\usepackage{relsize}
\usepackage{algorithm}
\usepackage{algcompatible}
\usepackage{setspace}
\floatname{algorithm}{}
\let\Algorithm\algorithm
\renewcommand\algorithm[1][]{\Algorithm[#1]\setstretch{1.2}}
\renewcommand{\thealgorithm}{}
\usepackage{euscript}
\usepackage{enumerate}
\usepackage{enumitem}
\setlist{leftmargin=5.5mm}
\usepackage{subfigure}
\usepackage{tikz}
\usetikzlibrary{calc}
\usepackage{verbatim} 
\usepackage{xspace} 
\usetikzlibrary{decorations.pathreplacing}

\usetikzlibrary{arrows,decorations.pathmorphing,backgrounds,positioning,fit,petri}
\newtheorem{theorem}{Theorem}
\newtheorem{prop}[theorem]{Proposition}

\newtheorem{coro}[theorem]{Corollary}

\newtheorem{lem}[theorem]{Lemma}

\def\eps{\varepsilon}

\newcommand{\KPone}{$Sol(KP_1)$\xspace}
\newcommand{\KPsec}{$Sol(KP_2)$\xspace}

\newcommand{\KPt}{$Sol(KP_t)$\xspace}

\newcommand{\IIKP}{$\mbox{IIKP}$\xspace}

\begin{document}
\pagestyle{plain}

\begin{frontmatter}

\title{Approximating the Incremental Knapsack Problem}

\date{}

 \author[1,2]{Federico Della Croce}
 \author[3]{Ulrich Pferschy}
 \author[1]{Rosario Scatamacchia\corref{cor1}}
\cortext[cor1]{Corresponding author.}
 
  \address[1]{\small Dipartimento di Ingegneria Gestionale e della Produzione, Politecnico di Torino,\\ Corso Duca degli Abruzzi 24, 10129 Torino, Italy, \\{\tt \{federico.dellacroce, rosario.scatamacchia\}@polito.it }}
 \address[2]{CNR, IEIIT, Torino, Italy}
 \address[3]{\small Department of Statistics and Operations Research, University of Graz, Universitaetsstrasse 15, 8010 Graz, Austria, \\ {\tt pferschy@uni-graz.at}}

\begin{abstract}
We consider the 0--1 Incremental Knapsack Problem (IKP) where the capacity grows over time periods and if an item is placed in the knapsack in a certain period, it cannot be removed afterwards. The contribution of a packed item in each time period depends on its profit as well as on a time factor which reflects the importance of the period in the objective function. The problem calls for maximizing the weighted sum of the profits over the whole time horizon. 
In this work, we provide approximation results for IKP and its restricted variants. In some results, we rely on Linear Programming (LP) to derive approximation bounds and show how the proposed LP--based analysis can be seen as a valid alternative to more formal proof systems. We first manage to prove the tightness of some approximation ratios of a general purpose algorithm currently available in the literature and originally applied to a time-invariant version of the problem. We also devise a Polynomial Time Approximation Scheme (PTAS) when the input value indicating the number of periods is considered as a constant. Then, we add the mild and natural assumption that each item can be packed in the first time period. For this variant, we discuss different approximation algorithms suited for any number of time periods and for the special case with two periods. 
\end{abstract}
\begin{keyword} Incremental Knapsack problem \sep Approximation scheme \sep Linear Programming 
\end{keyword}

\end{frontmatter}

\section{Introduction}
\label{sec:introIKP}

The 0--1 Knapsack Problem (KP) is one of the paradigmatic problems in combinatorial optimization and has been object of numerous publications 
and two monographs 
\cite{MarTot90} and \cite{KePfPi04}.
In KP a set of items with given profits and weights is available and the aim is to select a subset of the items in order to maximize the total profit without exceeding a known knapsack capacity.
KP is weakly NP-hard, although in practice fairly large instances can be solved to optimality within
limited running time.
Various generalizations of KP have been considered over the years. 
We mention among others the most recent contributions we are aware of:
the collapsing KP (where the capacity of the constraint is  inversely related to the number of items placed inside the knapsack) \cite{CDSS17};
the discounted KP (where it is required to select a set of item groups where each group includes three items and at most one of the three items can be selected) \cite{CXYSW16};
the parametric KP (where the profits of the items are affine-linear functions of a real-valued parameter and the task is to compute a solution for all values of the parameter) \cite{GHRT17};
the KP with setups (where items belong to disjoint families 
and can be selected only if the corresponding family is activated) 
\cite{SDSS17, FUMOTR18, PS17}; the penalized KP (where each item has a profit, a weight, a penalty and the goal is to maximize the sum of the profits minus the greatest penalty value of the selected items) \cite{DCPFSC17} and 
the temporal KP (where a time horizon is considered, and each item consumes the knapsack capacity during a limited time interval only) \cite{CFMT16}.

We consider here a very meaningful and natural generalization of KP, namely the 0--1 Incremental Knapsack Problem ($\mbox{IKP}$) as introduced in \cite{BiSeYe13} where the constraining capacity grows over $T$ time periods. 
If an item is placed in the knapsack in a certain period, it cannot be removed afterwards. 
Each packed item contributes its profit for each time period in which it is included in the knapsack, 
with different impacts in the objective function depending on multiplicative time factors.
These reflect the different importance of the periods or allow a discounting over time.
The problem calls for maximizing the weighted sum of the profits accumulated over the whole time horizon. \\
$\mbox{IKP}$ generalizes the time-invariant Incremental Knapsack Problem, namely with unit time multipliers, considered in \cite{hart08, HaSh06, sha07} and which we will denote here as \IIKP.
$\mbox{IKP}$ has many real-life applications since, from a practical perspective, it is often required in resource allocation problems to deal with changes in the input conditions and/or in a multi--period optimization framework. 
In manufacturing, for instance, a producer may activate contracts with customers for a periodic (e.g.\ monthly) supply of its products within an expanding production plan. 
The production capacity is increased in each time period and the goal is to decide which contracts (and when) should be set in order to maximize the overall profits over a given time horizon. 
Here, the increasing capacity values represent the available production resources in each time period while the items are the orders to satisfy with corresponding profits and resource consumption. 
Other applications of an online \IIKP variant in the contexts of renewable resources and of trading operations are provided in \cite{Thtiwe16}.

\subsection{Related literature}

From a general point of view, \cite{HaSh06} introduced incremental versions of maximum flow, bipartite matching, and also knapsack problems. 
The authors in \cite{HaSh06} discuss the complexity of these problems and show how the incremental version even of a polynomial time solvable problem, like the max flow problem, turns out to be NP--hard. General techniques to adapt the algorithms for the considered optimization problems to their respective incremental versions are discussed. Also, a general purpose approximation algorithm is introduced. 

In \cite{BiSeYe13}, it is shown that even  \IIKP is strongly NP--hard. 
A constant factor algorithm is also provided under mild restrictions on the growth rate of the knapsack capacity.
In addition, a PTAS is derived for the special variant \IIKP under the assumption that $T$ is in $O(\sqrt{\log n})$, where $n$ is the number of items.  
In a recent contribution \cite{FaMa17}, an improved PTAS for \IIKP is lined out. The PTAS in \cite{FaMa17} polynomially depends on both $n$ and $T$ without any further restriction on the input data.  Both algorithms introduced in \cite{BiSeYe13} and \cite{FaMa17} cannot be adapted to tackle arbitrary $\mbox{IKP}$ instances. The PTAS derived in \cite{BiSeYe13} could be extended (for fixed $T$) to $\mbox{IKP}$ instances with monotonically non-increasing time factors. 
Likewise, the PTAS proposed in \cite{FaMa17} can be extended  to $\mbox{IKP}$ instances with monotonically increasing time multipliers.

A number of problems closely related to $\mbox{IKP}$ were tackled in the literature. In \cite{faa81, duwa85}, a multiperiod variant of the knapsack problem is considered. 
The capacity increases over periods as in $\mbox{IKP}$ but each item becomes available for packing only at a certain time period.
Thus, the set of items to choose from increases over time.
In \cite{faa81}, a branch and bound algorithm is proposed. In \cite{duwa85}, an efficient algorithm for solving the linear programming relaxation of the problem and a procedure for reducing the number of variables are presented.

In \cite{Thtiwe16}, an online knapsack problem with incremental capacity is considered. In each period a set of items is revealed without knowledge of future items and the goal is to maximize the overall value of the accepted items. The authors in \cite{Thtiwe16} introduce deterministic and randomized algorithms for this online knapsack problem and provide an analysis of their performance.

\subsection{Our contribution}

In this work, we provide approximation results for $\mbox{IKP}$ and its restricted variants. 
We employ Linear Programming (LP) to analyze the worst case performance of different algorithms for deriving approximation bounds. 
Interestingly, the proposed LP--based analysis can be seen as a valid alternative to formal proof systems based on analytical derivation. 
Indeed, recently a growing attention has been centered on the use of LP modeling for the derivation of formal proofs (see \cite{CW16,ACCEHMW16}) and we also show here a successful application of this technique.

In our contribution, in Section~\ref{sec:ApproxIKP} we first generalize for $\mbox{IKP}$ the performance analysis of the generic algorithmic approach derived in \cite{HaSh06} and originally applied to \IIKP. Moreover, we show the tightness of some approximation ratios provided by the approach, including approximation bounds derived in \cite{HaSh06}. 
Then, we devise a PTAS in Section~\ref{IKP_PTAS} when the number of time periods $T$ is a constant. 
While this is a stronger assumption than the one made for the PTAS for \IIKP in \cite{BiSeYe13} 
 and no assumption on input $T$ is needed in \cite{FaMa17}, our algorithm 
is, to the authors' knowledge, the first PTAS that applies to arbitrary $\mbox{IKP}$ instances. Also, the proposed approach is much simpler and does not require a huge number of complicated LP models as in  \cite{BiSeYe13, FaMa17}.

In Section~\ref{sec:RestrIKP} we introduce the natural assumption that each item can be packed at any point in time, also in the first period. 
For this reasonable variant of $\mbox{IKP}$ we discuss two approximation algorithms suited for any $T$. 
Finally, in Section~\ref{sec:RestrIKP23} we focus on an $\mbox{IKP}$ variant with $T=2$ and show an algorithm with 
an approximation ratio bounded by $\frac{1}{2} + \frac{\sqrt{2}}{4} = 0.853\ldots$ and by $\frac{6}{7}=0.857\ldots$ for \IIKP. A preliminary conference version containing some of the results of this paper derived for \IIKP is presented in \cite{DEPFSC17}.

\section{Notation and problem formulation}
\label{sec:themodelIKP}

In $\mbox{IKP}$  a set of $n$ items is given together with a knapsack with 
capacity values $c_t$ for each time period $t=1,\ldots,T$ 
and increasing over time, i.e.\ $c_{t-1} \leq c_t$ for $t=2,\ldots,T$.
Each item $i$ has a profit $p_{i} > 0$ and a weight $w_{i} > 0$. If an item is placed in the knapsack at time $t$, it cannot be removed at a later time. The contribution of an item $i$ at time $t$ is equal to $\Delta_t \, p_{i}$, where  $\Delta_t > 0$ denotes the time multiplier of period $t$. The problem calls for maximizing the total profit of the selected items without exceeding the knapsack capacity summing up the profits attained for each time period.
In order to derive an ILP-formulation, we associate with each item $i$ a $0-1$ variable $x_{it}$ such that $x_{it}=1$ iff item $i$ is contained in the knapsack in period $t$.
$\mbox{IKP}$ can be formulated by the following ILP model: 
\begin{align}
   (\mbox{IKP}) \qquad \text{maximize}\quad & \sum\limits_{t=1}^T \sum\limits_{i=1}^{n} \Delta_t p_{i}x_{it}\label{eq:ObjIKP}\\
	\text{subject to}\quad
	& \sum\limits_{i=1}^{n} w_{i}x_{it} \leq c_t \quad t= 1,\dots,T; \label{eq:capIKP}\\
	& x_{i(t-1)} \leq x_{it} \qquad i= 1,\dots,n, \quad t= 2,\dots,T; \label{eq:PredIKP}\\
	& x_{it} \in\{0,1\} \qquad i= 1,\dots,n, \quad t= 1,\dots,T. \label{eq:varDefIKP}
\end{align}
The cost function (\ref{eq:ObjIKP}) maximizes the sum of the profits over the time horizon. Constraints (\ref{eq:capIKP}) guarantee that the items weights sum does not exceed capacity $c_t$ in each period $t$. Constraints (\ref{eq:PredIKP}) ensure that an item chosen at time $t$ cannot be removed afterwards. Constraints (\ref{eq:varDefIKP}) indicate that all variables are binary. 
In \IIKP, we have $\Delta_t =1$ for each $t=1,\dots,T$. \\

\smallskip

In the following, we denote by $x^*$ (resp.\ $z^*$) the optimal solution (resp.\ solution value) of $\mbox{IKP}$. The contribution of time period $t$ to $z^*$ is denoted by $z^*_t =  \sum_{i=1}^{n} \Delta_t p_{i} x^*_{it}$. We also denote by $z^{Y}$ the solution value yielded by a generic algorithm $Y$ and by $pc_j = \sum_{t=1}^T \Delta_t p_{j} x_{jt}$ the {\em profit contribution} of an item $j$ of a given feasible solution $x$. 
For each period $t$ and the related capacity value $c_t$, we define the corresponding standard knapsack problem as $KP_t$. This means that in $KP_t$ we consider only one of the $T$ constraints (\ref{eq:capIKP}). The optimal solution value and related item set of each $KP_t$ is denoted by $z_t$ and \KPt respectively. Finally, we define for a generic item set $S$ the canonical weight sum $w(S) := \sum_{j \in S} w_j$ and profit sum $p(S) :=\sum_{j \in S} p_j$.
We say that an approximation algorithm $Y$ has {\em approximation ratio} $r$, 
for $r\in [0,1)$, if $z^Y \geq r\cdot z^*$ for every instance of $\mbox{IKP}$.

\subsection{IKP Linear Relaxation}
The linear relaxation of $\mbox{IKP}$, where fractions of items can be packed, 
i.e.\  where constraints (\ref{eq:varDefIKP}) are replaced by the inclusion in the interval $[0,1]$, can be easily computed. 
In fact, it suffices to order the items by non--increasing efficiencies $\frac{p_i}{w_i}$ in $O(n \log n)$ and to fill the capacity of the knapsack in each period according to this ordering in $O(n)$. 
Thus, the execution time for solving the linear relaxation of $\mbox{IKP}$, 
hereafter denoted by $\mbox{IKP}_{LP}$, 
can be bounded by $O(n \log n+T)$. 

An alternative approach would consider each time period $t$ separately and solve 
the LP-relaxation for each knapsack problem $KP_t$.
Using the linear-time median algorithm to find the split item,
this can be done in $O(n)$ time for every $t=1,\ldots, T$, thus requiring $O(n\cdot T)$ time in total.
Clearly, both approaches deliver exactly the same solution structure.
For small or even constant values of $T$ the latter approach will dominate the former.

Taking a closer look at the $T$ iterations of this approach one can exploit the fact that the split item (and thus the LP-relaxation) is always computed on the same item set, 
only the capacities change.
Avoiding repetitions 
one can even get a total running time of $O(n \log T)$.
Since $T$ can be expected to be of moderate size, this is close to a linear time algorithm.
The technical details of this approach can be briefly described as follows:

The classical computation of the split item in linear time requires a sequence of $\log n$
iterations (see~\cite[ch~3.1]{KePfPi04}). 
In iteration $i$ for $i=1,\ldots, \log n$, there are $n/2^{i-1}$ items considered, 
their median w.r.t.\ the efficiencies is computed in $O(n/2^{i-1})$ time,
and the considered items are partitioned into two subsets with efficiencies lower and higher
than the median, respectively.
Then, it is determined (in constant time) which of the two subsets contains the split item
and the next iteration is performed on the selected subset with $n/2^i$ items.

Applying this procedure several times for different capacity values on the same set of items
does not change the positions of medians and the resulting subset structure.
Of course, the selections of subsets in each iteration may change
for different capacities.
Nevertheless, once a certain subset was considered, its median computed and the partitioning into two
halves performed, this process does not have to be repeated for the same subset again
for a different capacity value.
Thus, each subset of size $n/2^i$ has to be processed (in linear time) only once.

We split the analysis of the total effort required for all $T$ capacity values into two phases. We first consider the iterations $i=1, \ldots, \log T$. 
In each such iteration a subset of size $n/2^{i-1}$ is considered.
However, over all $T$ iterations each item can occur only once in a subset of size 
$n/2^{i-1}$.
Thus, the total effort for the $T$ executions of an iteration indexed by $i$ is bounded by $O(n)$.
The time for this first phase is thus $O(n \log T)$.
The running time for the second phase can be simply summed up by
$$ T \cdot \sum_{i=\log T+1}^{\log n} \frac{n}{2^{i-1}} 
\leq T \cdot 2\: \frac{n}{2^{\log T}} = 2n\,. $$
We summarize our considerations in the following statement.
\begin{theorem}
\label{th:lprelax}
The LP-relaxation of $\mbox{IKP}$ can be computed in \\
$O(\min\{n \log T, n \log n+T\})$.
\end{theorem}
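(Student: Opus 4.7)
The plan is to establish the two bounds appearing inside the $\min$ by exhibiting two separate procedures, each realizing one of them, and noting that both produce the same LP-optimal structure (which is the standard greedy filling by non-increasing efficiency $p_i/w_i$, truncated at the split item for each period).

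For the first bound $O(n\log n+T)$, I would sort the items once by non-increasing efficiency in $O(n\log n)$ and precompute the prefix weight sums $W_k=\sum_{j\le k} w_j$. Since the capacities are non-decreasing in $t$, the split index $s(t)$ defined by $W_{s(t)-1}\le c_t<W_{s(t)}$ is non-decreasing in $t$ as well, so a single left-to-right pointer scan over the sorted sequence, advanced monotonically as $t$ grows, locates all $T$ split items in $O(n+T)$ additional time. This gives the $O(n\log n+T)$ bound without any delicate amortization.

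For the sharper bound $O(n\log T)$, I would mimic the classical linear-time split-item computation of~\cite[ch.~3.1]{KePfPi04}: at iteration $i$ one works on a subset of size $n/2^{i-1}$, computes its median in linear time, partitions it into a high-efficiency and a low-efficiency half, and, based on the cumulative weight of the high half, decides in constant time which half contains the split item. The crucial observation is that the identities of the subsets generated by successive medians depend only on the item set, not on the capacity; therefore, if one records for every already-inspected subset $S$ its median and its partition into $S^+,S^-$ together with $w(S^+)$, then this information can be reused across all $T$ capacity values, and each distinct subset is processed at most once.

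I would then split the runtime analysis into two phases, exactly as hinted in the excerpt. In Phase~1, $i=1,\dots,\log T$, the subsets of size $n/2^{i-1}$ at level $i$ partition the whole item set, so the total work at level $i$ across all $T$ capacities is $O(n)$, giving $O(n\log T)$ in total. In Phase~2, $i=\log T+1,\dots,\log n$, I bound the work per capacity by the geometric tail
\[
\sum_{i=\log T+1}^{\log n}\frac{n}{2^{i-1}}\;\le\;\frac{2n}{T},
\]
so summing over the $T$ capacities yields only $O(n)$. Adding the two phases gives $O(n\log T)$, and taking the minimum with the first bound finishes the proof.

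The main obstacle, and the place where one must be careful, is the bookkeeping justifying that the Phase~1 total really is $O(n)$ per level and not $O(n)$ per capacity: one has to argue that the sequence of recursive partitions induced by repeated median extraction is entirely determined by the input items (and their efficiencies), so that the decision ``which half contains the split item'' is the only capacity-dependent piece of data, and this decision is made in $O(1)$ once $w(S^+)$ is stored. Once that invariant is in place, the rest is a routine geometric-sum estimate.
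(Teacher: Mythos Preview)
Your proposal is correct and follows essentially the same route as the paper: the $O(n\log n+T)$ bound via sorting plus a monotone split-item scan, and the $O(n\log T)$ bound via the reused median-partition structure with the same two-phase analysis (levels $1,\dots,\log T$ costing $O(n)$ each, and the geometric tail over levels $\log T+1,\dots,\log n$ contributing $O(n)$ in total across all $T$ capacities). If anything, your treatment of the first bound, with explicit prefix sums and the monotonicity of $s(t)$, is slightly more detailed than the paper's one-line justification.
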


\section{Approximating IKP}
\label{sec:ApproxIKP}

\subsection{Approximation ratios of a general purpose algorithm}
\label{sec:HarmNumRatios}
In \cite{HaSh06}, a general framework for deriving approximation algorithms is provided. 
Following the scheme in \cite{HaSh06} which was originally applied to \IIKP, we consider the following algorithm $A$. 
The algorithm employs an $\eps$--approximation scheme to obtain a feasible solution for each knapsack problem $KP_t$ with solution value denoted by $z_{t}^{A}$ for $t=1, \ldots, T$.
Each such solution is also a feasible solution for $\mbox{IKP}$ where 
the items constituting $z_t^A$ are present in all successive time periods (and no other items). 
The algorithm chooses as a solution value $z^A$ the maximum among all these candidates, i.e. 
\begin{equation}\label{eq:z^a}
z^A = \max_{t=1,\dots,T}\left\{\sum_{\tau=t}^T \Delta_\tau \, z_{t}^{A}\right\}.
\end{equation}
The obvious extension of $A$, which solves each knapsack problem $KP_t$ to optimality with value $z_t$ (instead of $z_{t}^{A}$) and then proceeds as in (\ref{eq:z^a}), 
will be denoted by $A^*$ with solution values $z^{A^*}$.
Clearly, $A^*$ is not polynomial.

Let us define quantity
$\Theta = {{‎‎\sum}}_{t=1}^T \frac{\Delta_t}{\sum_{\tau=t}^T \Delta_\tau}$
, with $\Theta \leq T$. 
The following generalization of Theorem 3 in \cite{HaSh06} holds.
\begin{theorem}
\label{ShHaGEN}
Algorithm $A$ is an approximation algorithm for $\mbox{IKP}$ with approximation ratio at least $\frac{1-\eps}{\Theta} \geq \frac{1-\eps}{T}$. 
\end{theorem}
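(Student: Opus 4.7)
The plan is to use the optimal solution $x^*$ as a witness, derive a family of $T$ lower bounds on $z^A$ (one per candidate solution indexed by $t$), and then combine these bounds with carefully chosen weights so that the right-hand side telescopes into $z^*$.

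First I would observe that for every period $t$ the restriction of $x^*$ to that period, namely the item set $\{i : x^*_{it}=1\}$, satisfies $\sum_i w_i x^*_{it} \le c_t$ and is therefore a feasible solution of the single-knapsack instance $KP_t$. This forces $z_t \ge \sum_i p_i x^*_{it}$, and multiplying by $\Delta_t$ yields the key inequality $\Delta_t z_t \ge z^*_t$. Using that the $\eps$-approximation scheme delivers $z_t^{A}\ge (1-\eps)z_t$, I then get
\begin{equation*}
\Delta_t\, z_t^{A} \;\ge\; (1-\eps)\,z^*_t \qquad \text{for every } t=1,\dots,T.
\end{equation*}

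Next I would exploit the definition~(\ref{eq:z^a}): since $z^A$ is the maximum over all periods of $\sum_{\tau=t}^T\Delta_\tau z_t^A$, one has $z^A \ge \sum_{\tau=t}^T\Delta_\tau\, z_t^A$ for every $t$, which rewrites as
\begin{equation*}
\frac{\Delta_t}{\sum_{\tau=t}^T\Delta_\tau}\, z^A \;\ge\; \Delta_t\, z_t^{A} \;\ge\; (1-\eps)\,z^*_t .
\end{equation*}
Summing these $T$ inequalities over $t$ produces $\Theta\, z^A \ge (1-\eps)\sum_{t=1}^T z^*_t = (1-\eps) z^*$, which is exactly the desired bound $z^A \ge \frac{1-\eps}{\Theta}\, z^*$. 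The second inequality in the statement is then immediate, because each summand $\frac{\Delta_t}{\sum_{\tau=t}^T\Delta_\tau}$ lies in $(0,1]$, so $\Theta\le T$ and therefore $\frac{1-\eps}{\Theta}\ge\frac{1-\eps}{T}$.

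The main conceptual obstacle is spotting the right way to aggregate the $T$ different lower bounds on $z^A$; the natural choice of scaling each bound indexed by $t$ by the weight $\frac{\Delta_t}{\sum_{\tau=t}^T\Delta_\tau}$ is what makes the $\sum_{\tau=t}^T\Delta_\tau$ factor cancel on the left while the right-hand side accumulates exactly into $z^*$. Everything else is routine: feasibility of the optimal packing as a witness for each $KP_t$, and the definition of the $\eps$-approximation scheme. No further assumption on the $\Delta_t$ (such as monotonicity) is needed, so the argument extends the \IIKP bound of \cite{HaSh06} to arbitrary time multipliers.
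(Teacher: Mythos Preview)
Your proof is correct and is essentially the same as the paper's: both use that the restriction of the optimal solution to period $t$ is feasible for $KP_t$, combine this with the $\eps$-approximation guarantee to get $\Delta_t z_t^A \ge (1-\eps)z_t^*$, bound each $z_t^A$ via $z^A \ge \sum_{\tau=t}^T \Delta_\tau z_t^A$, and sum over $t$ so that the coefficients aggregate to $\Theta$. The paper presents the same chain of inequalities in a single line without explicitly framing the summation as a weighted aggregation, but the argument is identical.
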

\begin{proof}
For each $t=1,\dots,T$ we have $\frac{\Delta_t z_{t}^{A}}{1 -\eps} \geq z_t^{*}$ and also $z_{t}^{A} \leq \frac{z^A}{\sum\limits_{\tau=t}^T \Delta_\tau}$ from \eqref{eq:z^a}. Then, the following series of inequalities holds:
$$  z^* = \sum\limits_{t=1}^T z_t^{*} \leq  \sum\limits_{t=1}^T   \frac{\Delta_t z_{t}^{A}}{(1 -\eps)} \leq   \sum\limits_{t=1}^T \frac{\Delta_t z^{A}}{(1 -\eps)\sum\limits_{\tau=t}^T \Delta_\tau} = \frac{\Theta \, z^{A}}{1-\eps} \implies  \frac{z^{A}}{z^*} \geq \frac{1-\eps}{\Theta}$$
\end{proof}

For \IIKP considered  in \cite{HaSh06}, we have $\Theta=\mathcal{H}_T$ with $\mathcal{H}_T\approx \ln T$ being the $T$-th harmonic number.
At the present state of the art, the tightness of this bounds is an open question even for \IIKP. 
Considering the variant $A^*$, where each $KP_t$ is solved to optimality (i.e.\ algorithm $A$ for $\eps = 0$), we can state the following result. 
\begin{theorem}
\label{GenTight}
For any value of $T$ and any choice of time multipliers $\Delta_t$, 
algorithm $A^*$ has a tight approximation ratio $\frac{1}{\Theta}$ for $\mbox{IKP}$.
\end{theorem}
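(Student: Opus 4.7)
The plan has two directions. The inequality $z^{A^*}\ge z^*/\Theta$ is immediate from Theorem~\ref{ShHaGEN} applied with $\eps=0$, so the whole work lies in exhibiting, for an arbitrary $T$ and arbitrary positive multipliers $\Delta_1,\ldots,\Delta_T$, an $\mbox{IKP}$ instance on which the ratio $1/\Theta$ is attained exactly.

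The guiding idea is to reverse-engineer the chain
\[
z^*=\sum_{t=1}^T z_t^{*}\le\sum_{t=1}^T \Delta_t z_t\le\sum_{t=1}^T\frac{\Delta_t\,z^{A^*}}{S_t}=\Theta\,z^{A^*},
\]
with $S_t:=\sum_{\tau=t}^T\Delta_\tau$, so that both inequalities hold with equality. The second one forces $z_t\cdot S_t$ to equal the same constant $C$ for every $t$; the first forces the items present in $x^*$ at period $t$ to constitute an optimal $KP_t$ solution. Accordingly, I would take $T$ items $a_1,\ldots,a_T$ with profits
\[
p_1:=\frac{C}{S_1},\qquad p_t:=C\left(\frac{1}{S_t}-\frac{1}{S_{t-1}}\right)=\frac{C\,\Delta_{t-1}}{S_t\,S_{t-1}}\quad(t\ge 2),
\]
so the partial sums telescope to $\sum_{i=1}^t p_i=C/S_t$. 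Weights $w_t>0$ are then chosen so that the efficiencies $p_t/w_t$ are strictly decreasing in $t$, and capacities are set to $c_t:=\sum_{i=1}^t w_i$.

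Verification is essentially algebraic. The efficiency ordering ensures that for every $t$ the greedy/Dantzig ordering for $KP_t$ selects exactly $\{a_1,\ldots,a_t\}$, which saturates $c_t$; hence $z_t=C/S_t$ and $z^{A^*}=\max_t z_t S_t=C$. The feasible IKP policy that inserts $a_t$ at period $t$ yields objective value $\sum_{t=1}^T p_t S_t=C\sum_{t=1}^T\Delta_t/S_t=C\,\Theta$, and the already-established upper bound $z^*\le\Theta z^{A^*}=C\Theta$ shows this policy is optimal. Therefore $z^{A^*}/z^*=1/\Theta$, completing the tightness argument. The only delicate step is certifying that $\{a_1,\ldots,a_t\}$ is truly the \emph{optimum} of $KP_t$ and not merely a feasible packing; this follows from the monotonicity of the efficiencies, which guarantees integrality of the LP-optimum of $KP_t$ and excludes any profit-improving exchange with an item $a_i$ for $i>t$.
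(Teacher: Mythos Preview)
Your argument is correct. The lower bound $z^{A^*}\ge z^*/\Theta$ is indeed Theorem~\ref{ShHaGEN} with $\eps=0$, and your tight instance works: the telescoping profits give $\sum_{i\le t}p_i=C/S_t$, the strictly decreasing efficiencies make the LP-relaxation of each $KP_t$ integral with optimum $\{a_1,\ldots,a_t\}$, hence $z_t=C/S_t$ and $z^{A^*}=C$; the incremental policy achieves $\sum_t p_t S_t=C\Theta$, and the already-proved inequality $z^*\le\Theta z^{A^*}$ pins down $z^*=C\Theta$.

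Your route differs from the paper's in two respects. First, the paper re-derives the bound $1/\Theta$ by setting up a primal LP that models the worst case of $A^*$ and exhibiting matching primal/dual optima; this is thematically important for the paper (LP-based worst-case analysis) but logically redundant once Theorem~\ref{ShHaGEN} is available, and you rightly skip it. Second, for the tight instance the paper uses many \emph{identical unit items} ($p_j=w_j=1$) and sets each capacity $c_t$ equal to the primal LP value $h_t=\frac{OPT}{S_t\,\Theta}$; to make these integers it writes $\Theta=a/b$, which tacitly assumes rational multipliers. Your construction instead uses exactly $T$ tailor-made items with real profits and weights, so it applies verbatim to arbitrary positive $\Delta_t$ and is minimal in size. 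The trade-off is that the paper's instance makes the verification of each $z_t$ trivial (fill the knapsack with unit items), whereas yours needs the LP-integrality observation you supply; conversely, the paper must argue that filling all capacities over time is IKP-optimal, while you get this for free from the upper bound $z^*\le\Theta z^{A^*}$.
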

\begin{proof}
We can evaluate the performance of algorithm $A^*$ by an alternative analysis based on a Linear Programming model. 
More precisely, we consider an LP formulation with non-negative variables $h^A$ and $h_t$ associated with $z^{A^*}$ and $z_t$ respectively 
and a positive parameter $OPT > 0$ associated with $z^*$. 
The corresponding LP model for evaluating the worst case performance of algorithm $A^*$ is as follows:
\begin{align}
   \text{minimize}\quad & h^A \label{eq:ObjAlgoA}\\
	\text{subject to}\quad
	& h^A  - \sum\limits_{\tau=t}^T \Delta_\tau \, h_t  \geq 0 \qquad t= 1,\dots,T; \label{eq:MaxValueA}\\
	& \sum\limits_{t=1}^T \Delta_t h_t \geq OPT \label{eq:UBIKP}\\
	& h^A \geq 0 \label{eq:hAdef}\\
	& h_t \geq 0 \qquad t= 1,\dots,T. \label{eq:hTdef}
\end{align}
The value of the objective function (\ref{eq:ObjAlgoA}) provides a lower bound on the worst case performance of algorithm~$A^*$.
Constraints (\ref{eq:MaxValueA}) guarantee that the contribution of each optimal knapsack solution $KP_t$ as a solution of $\mbox{IKP}$ will be taken into account according to (\ref{eq:z^a}). 
Constraint (\ref{eq:UBIKP}) indicates that the sum of the knapsack solutions $z_t$, multiplied by the associated $\Delta_t$, constitute a trivial upper bound on $z^*$. 
Constraints (\ref{eq:hAdef}), (\ref{eq:hTdef}) indicate that variables are non-negative. Solving model (\ref{eq:ObjAlgoA})--(\ref{eq:hTdef}) to optimality provides lower bounds on the performance ratio of algorithm $A^*$ equal to $\frac{h^A}{OPT}$ for any $T$.
We will derive the optimal solutions of model \eqref{eq:ObjAlgoA}--\eqref{eq:hTdef} by considering the related dual problem. Let us introduce $T +1$ dual variables $\lambda_i ~ (i=1,\dots,T+1)$ associated with constraints \eqref{eq:MaxValueA}--\eqref{eq:UBIKP} respectively. The dual formulation of model \eqref{eq:ObjAlgoA}--\eqref{eq:hTdef} is as follows: 
\begin{align}
   \text{maximize}\quad & OPT \, \lambda_{(T+1)} \label{eq:ObjDUALAlgoA}\\
	\text{subject to}\quad
	& \sum\limits_{t=1}^T \Delta_t\leq 1 \label{eq:hAA}\\
	& -\sum\limits_{\tau=t}^T \Delta_\tau \, \lambda_t + \Delta_t \lambda_{(T+1)} \leq 0 \qquad t= 1,\dots,T; \label{eq:htt}\\
	& \lambda_t \geq 0 \qquad t= 1,\dots,T+1. \label{eq:lambdadef}
\end{align}
Constraints \eqref{eq:hAA}--\eqref{eq:htt} correspond to primal variables $h^A$, $h_t  ~ (t= 1,\dots,T)$ respectively.
Feasible solutions of primal and dual models are:
\begin{align}
& h^{A} = \frac{OPT}{\Theta}; \, h_t = \frac{OPT}{\sum\limits_{\tau=t}^T\Delta_\tau \, \Theta}  \qquad t= 1,\dots,T; \label{PrimalGen}\\
&\lambda_{(T + 1)} = \frac{1}{\Theta}; \, \lambda_t = \frac{\Delta_t}{\sum\limits_{\tau=t}^T\Delta_\tau \, \Theta}  \qquad t= 1,\dots,T. \label{DualGen}
\end{align}
It is easy to verify that such solutions are feasible and satisfy all corresponding constraints to equality for any distribution of time multipliers and for any value of $T$. Since $h^{A} = OPT \, \lambda_{(T+1)}$, by strong duality both primal and dual solutions are also optimal. Hence, the bound $\frac{h^A}{OPT}$ is equal to $\frac{1}{\Theta}$ for any $T$. 

It was shown in Theorem~\ref{ShHaGEN} (for $\eps=0$ ) that the approximation ratio of $A^*$ is at least $\frac{1}{\Theta}$.
Now we will derive instances where this bound can be reached.
Therefore, we construct instances where the optimal solution values of $KP_t$ in each period $t$ are equal to the values of $h_t$ for $t=1,\dots, T$ in \eqref{PrimalGen} and where the optimal solution value for $\mbox{IKP}$ is equal to the weighted sum of all these solutions, namely $z^* = \sum\limits_{t=1}^T \Delta_t h_t$. Such target instances can be generated by the following procedure:
\begin{enumerate}  
\item We first represent quantity $\Theta$ as a fraction, 
i.e.\ $\Theta = \frac{a}{b}$ where b is the smallest common multiple of the denominators of fractions $\frac{\Delta_1}{\Delta_1 + \dots + \Delta_T} +  \dots + \frac{\Delta_{T}}{\Delta_T}$. Then, we set $OPT = a$ so as to get integer values $h_t$.
\item We generate an $\mbox{IKP}$ instance as follows:
$$n = \frac{b}{\Delta_T},\ p_j = w_j = 1\ (j=1,\dots, n),\ c_t = h_t \, (t=1,\dots, T)$$
The optimal solution of each $KP_t$ will pack items until the corresponding capacity $c_t$ is filled and thus will yield a solution value equal to $h_t$. 
The number of items is $\frac{b}{\Delta_T}$ because the capacity in the last period $T$ is $c_T = h_T = \frac{OPT}{\Delta_T \Theta}= \frac{a}{\Delta_T \frac{a}{b}} = \frac{b}{\Delta_T}$. At the same time, the optimal solution for $\mbox{IKP}$ can be obtained by progressively packing all items over time periods while filling the capacities $c_t$, 
hence $z^* = \sum\limits_{t=1}^T \Delta_t c_t = \sum\limits_{t=1}^T \Delta_t h_t$. 
\end{enumerate}  
\end{proof}
As an example of the outlined procedure, consider the \IIKP variant with $T=3$ for which $\mathcal{H}_T = \frac{11}{6}$. Correspondingly, the following instance with $n=6$, $p_j = w_j = 1$ ($j=1,\dots, 6$), $c_1 = 2$, $c_2 = 3$, $c_3 = 6$ is generated.
The optimal solution is given by greedily packing as many items as possible in each time period and thus filling the corresponding capacities ($z^* = 11$). The optimal solutions values of the KPs are equal to  2, 3, 6 respectively. 
Hence we have $z^{A^*} =  \max \{3\cdot2, 2\cdot3, 6 \} = 6$ which proves the tightness of the approximation bound $\frac{1}{\mathcal{H}_3}  = \frac{6}{11}$.


We remark that the bound tightness cannot be straightforwardly generalized to algorithm $A$, where an $\eps$--approximation scheme is run for solving each KP. 
We could get the ratio in Theorem \ref{ShHaGEN} by solving model (\ref{eq:ObjAlgoA})--(\ref{eq:hTdef}) where the term $\sum\limits_{t=1}^T \Delta_t h_t$  in constraint (\ref{eq:UBIKP}) is divided by ($1 - \eps$) and the optimal values of primal variables are multiplied by $(1 - \eps)$. However, the generation of tight instances is strictly related to the choice of the approximation algorithm for KPs.

\subsection{A PTAS when T is a constant} 
\label{IKP_PTAS}



In the following it will be convenient to consider a {\em residual problem} $\mbox{IKP}^R$ with optimal solution value $z_R^*$.
It is derived from any given instance of $\mbox{IKP}$ by defining a subset of items, 
a residual capacity $c^R_t \leq c_t$ for every time period $t$ with $c^R_{t-1} \leq c^R_t$,
and an {\em earliest insertion time} $t_j$, which means that item $j$ cannot be put into the knapsack in time periods before $t_j$.
Formally, the conditions $x_{j \tau}=0$ for $\tau < t_j$ are added in the ILP model.
Obviously, every $\mbox{IKP}$ instance is equivalent to a restricted instance $\mbox{IKP}^R$ with $t_j := \min \{t \mid c_t \geq w_j, t=1,\ldots, T\}$.
Clearly, the solution value of the LP-relaxation of this $\mbox{IKP}^R$ can be smaller than the LP-relaxation of $\mbox{IKP}$.
Concerning the structure of the LP-relaxation of $\mbox{IKP}^R$ we can observe the following.
\begin{lem}\label{th:lpres}
The solution of the LP-relaxation of $\mbox{IKP}^R$ contains at most $T$ fractional items.
\end{lem}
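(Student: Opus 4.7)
The plan is to show that any vertex (basic feasible solution) of the LP-relaxation of $\mbox{IKP}^R$ has at most $T$ items with some $x_{it}\in(0,1)$, using a dimension-counting argument on a natural product decomposition of the feasible polytope.

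First I would decompose the feasible region. For each item $i$, the profile $(x_{i,t_i},x_{i,t_i+1},\ldots,x_{iT})$ is a non-decreasing sequence in $[0,1]$, i.e.\ a point of the polytope $P_i$ defined by the relevant box and precedence constraints alone. The feasible region of the LP-relaxation is then $\bigl(\prod_{i=1}^n P_i\bigr)$ intersected with the $T$ global capacity hyperplanes.

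Next I would establish a per-item dimension identity. Let $d_i$ denote the number of distinct values in the open interval $(0,1)$ attained by item $i$'s profile. I claim that the unique face of $P_i$ whose relative interior contains this profile has dimension exactly $d_i$: equalities between consecutive coordinates along any constant stretch correspond to tight precedence constraints, plateaus at value $0$ or value $1$ additionally saturate the corresponding box constraints, and the only surviving degrees of freedom are the heights of the fractional plateaus. In particular, item $i$ is fractional (some $x_{it}\in(0,1)$) if and only if $d_i\geq 1$.

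Finally I would combine this with the vertex property. At a basic feasible solution, the lineality space of the active constraints must be trivial. By the per-item identity the product polytope contributes lineality of total dimension $\sum_i d_i$, and each of the at most $T$ tight capacity hyperplanes can cut this dimension by at most one. Hence $\sum_i d_i\leq T$, and so the number of fractional items $|\{i:d_i\geq 1\}|$ is at most $\sum_i d_i\leq T$.

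The main obstacle is a clean verification of the per-item dimension identity across all degenerate profile shapes (items never packed, items fully packed, profiles whose only distinct values are $0$ and/or $1$, and general mixtures), together with checking that the tight constraints within each $P_i$ remain linearly independent once the capacity hyperplanes are imposed. Once this local count is in place, the global bound follows from standard polyhedral reasoning.
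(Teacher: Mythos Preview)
Your argument is correct and takes a genuinely different route from the paper. The paper proves the lemma by an exchange argument tied to the objective: it shows that in an \emph{optimal} LP solution, each time period introduces at most one new fractional item, because two simultaneous new fractional items could be traded against each other (shifting mass toward the more efficient one) to strictly improve the objective. Your approach instead ignores the objective entirely and counts degrees of freedom at a \emph{vertex}: the per-item chain polytopes contribute a face of dimension $\sum_i d_i$, and only the $T$ capacity inequalities are available to cut this down to a point, forcing $\sum_i d_i\le T$. Your version is in a sense more general---it bounds the number of fractional items at every basic feasible solution, not just optimal ones---while the paper's exchange argument yields the finer structural statement that the fractional items can be ordered so that at most one is ``new'' per period. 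One small comment: your worry in the last paragraph about linear independence of the $P_i$ constraints ``once the capacity hyperplanes are imposed'' is not really the issue; the clean formulation is that at a vertex the rank of all tight constraints equals the ambient dimension, the tight box/precedence constraints alone have rank equal to that dimension minus $\sum_i d_i$, and the at most $T$ tight capacities supply the remaining rank, hence $\sum_i d_i\le T$.
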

\begin{proof}
We claim that considering the time periods from $t=1,\ldots, T$ at most one additional fractional item can appear in any period $t$ which implies the above statement.

Assume that in some period $t$ there are two new fractional items $j_1$ and $j_2$ in the solution (which were not packed at all in periods before $t$) with efficiencies  $p_{j_1}/w_{j_1} \geq p_{j_2}/w_{j_2}$. 
Similar to the standard KP (see~\cite[Theorem 2.2.1]{KePfPi04})
one can invoke an exchange argument and shift weight from $j_2$ to $j_1$ by setting 
$d:=\min\{w_{j_2}x_{j_2 t}, w_{j_1}(1-x_{j_1 t}) \}$
and defining a new solution with 
$x_{j_1 t}' := x_{j_1 t} + d/w_{j_1}$
and
$x_{j_2 t}' := x_{j_2 t} - d/w_{j_2}$.
Obviously, the total capacity consumed by items $j_1$ and $j_2$ does not change in period $t$ and neither in any later periods up to $T$.
Hence, the new solution is feasible and gives a higher profit contribution in all periods $t, t+1,\ldots, T$.
Moreover, in the new solution at most one of $j_1$ and $j_2$ has a fractional value in period $t$.
\end{proof}
Note that a fractional item introduced in some period $t$ may well be increased during a later period and possibly reach $1$.
Moreover, different from the LP-relaxation for $\mbox{IKP}$, it may be the case that the capacity is not fully utilized in some periods.
This can occur e.g.\ in some period $t$, if a highly efficient item with large weight has an earliest insertion time larger than $t$.
In this case it can make sense to leave some empty space in period $t$ and thus allow this item to be fully packed when it becomes available.

Taking a closer look at the solution structure of the LP-relaxation of $\mbox{IKP}^R$ we can also compute it by a combinatorial algorithm with polynomial running time. However, the details of this algorithm are beyond the scope of this paper.

\medskip
Computing the LP-relaxation of any restricted instance $\mbox{IKP}^R$ and rounding down all fractional values gives a feasible solution with value $z^{\prime}_R$.
It follows from Lemma~\ref{th:lpres} that the maximum loss from rounding down is bounded by $T$ times the maximum profit contribution of a single item.
Thus we have:
\begin{equation}\label{eq:zAprimeR}
z_R^* \leq z^{\prime}_R + T \cdot \max_{j=1,\ldots,n} \sum_{\tau=t_j}^T \Delta_\tau p_j
\end{equation}

For a residual problem $\mbox{IKP}^R$ we can apply the following approximation algorithm $A^{\prime}$, which is a variant of algorithm $A$ described in Section~\ref{sec:HarmNumRatios}.
We run an FPTAS for each time period $t$ yielding $\eps$-approximations $z_{tR}^{A^\prime}$.
Then we also consider an alternative solution 
derived by computing the optimal solution of the LP-relaxation of $\mbox{IKP}^R$ and rounding down all fractional variables to $0$ as described above
reaching a solution value $z^{\prime}_R$.
Finally, we take the maximum between these $T+1$ candidates obtaining a solution value $z_R^{A^{\prime}}$, namely 
\begin{equation}\label{eq:z^B}
z_R^{A^{\prime}} = \max\left\{ z^{\prime}_R,\: 
\max_{t=1,\ldots,T} 
\left\{ \sum_{\tau=t}^T \Delta_\tau z_{tR}^{A^\prime}\right\}
\right\}.
\end{equation}
With Theorem~\ref{ShHaGEN} the overall approximation ratio of algorithm $A^{\prime}$ for $\mbox{IKP}^R$ can be bounded by $\rho = \frac{1-\eps}{T}$.
Note that the introduction of earliest insertion times for items in $\mbox{IKP}^R$ does not interfere with the analysis in Theorem~\ref{ShHaGEN}.

\smallskip
Computing $z^{\prime}_R$ can be done in 
polynomial time $poly(n,T)$.
The $T$ executions of a FPTAS for KP can be bounded by
$O(T(n \log(\frac{1}{\eps}) + (\frac{1}{\eps})^3 \log^2(\frac{1}{\eps})))$,
see~\cite{KelPfe04}.
 
\bigskip
Similarly to the line of reasoning for deriving PTAS's for KP 
(see, e.g., \cite{Sahni75, CaKePfPi00}), 
we propose an approximation scheme for $\mbox{IKP}$.
It is based on guessing (by going through all possible choices) the set $S_k$ of $k$ items and the associated starting periods $st_j$ for each item $j \in S_k$ which give the largest profit contributions in an optimal solution. 
For convenience, the minimum profit contribution of an item in $S_k$ is denoted by
$pcm = \min_{j\in S_k} \{\sum_{\tau=st_j}^T \Delta_\tau p_j\}$.

For each such choice of $S_k$ and starting periods, if it is feasible, one can define the following residual instance $\mbox{IKP}^R$.
It consists of the residual capacities $c^R_t = c_t - \sum_{j \in S_k, st_j \leq t} w_j$ 
in each time period $t$.
To preserve the incremental capacity structure of $\mbox{IKP}$
we set $c^R_t := \min\{c^R_t, c^R_{t+1}\}$ for $t=T-1, T-2, \ldots, 1$,
which is without loss of generality because of (\ref{eq:PredIKP}).
The item set of $\mbox{IKP}^R$ is restricted to items not in $S_k$ with profit contribution not exceeding $pcm$, which can be enforced by setting the earliest insertion time $t_j$ for all items $j \not \in S_k$ as  
$$t_j = \min \left\{t \mid c^R_t \geq w_j, \sum_{\tau=t}^T \Delta_\tau p_j \leq pcm,\, t=1,\ldots, T \right\}.$$
If the minimum is taken over the empty set, we remove $j$ from $\mbox{IKP}^R$.
We can now state our approximation scheme, denoted by algorithm \textit{Approx}.

\begin{algorithm}[H]
\begin{algorithmic}[1]
\STATEx \textbf{Input:} $\mbox{IKP}$ instance, $\eps$. 
\STATE \label{alg:1} 
Set $k := \min \left \{ n, \Bigl\lceil \frac{T}{\eps} \Bigr\rceil \right \}$.
\STATE \label{alg:2} 
Generate all subsets of $\{1,\ldots,n\}$ consisting of less than $k$ items.\newline
For each such subset generate all possible starting periods of items and check the feasibility for $\mbox{IKP}$.\newline
Let $z^\eps$ be the best objective value over all generated feasible solutions.
\STATE \label{alg:3} 
For each subset $S_k \subseteq \{1,\ldots,n\}$ with $|S_k|=k$, generate all possible starting periods of items and check the feasibility for $\mbox{IKP}$.
\STATE \label{alg:4} 
For every resulting feasible configuration, 
compute the overall profit contribution $PC(k)=\sum_{j\in S_k} pc_j$ and 
the minimum contribution $pcm$.
Determine the corresponding residual $\mbox{IKP}$ instance $\mbox{IKP}^R$.\newline
Apply algorithm $A^{\prime}$ to $\mbox{IKP}^R$ yielding a solution value $z_R^{A^{\prime}}$.\newline
Update $z^\eps := \max\{z^\eps, z_R^{A^{\prime}} + PC(k)\}$
\end{algorithmic}
\caption{\textbf{Algorithm} \textit{Approx}}
\end{algorithm}

The following proposition characterizes the running time of \textit{Approx}.
\begin{prop}
\label{ptas_complex}
The running time complexity of \textit{Approx} is polynomial in the size $n$ of the input.
\end{prop}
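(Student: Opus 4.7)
The plan is a direct counting argument that exploits the standing assumption of Section~\ref{IKP_PTAS} that $T$ is a constant. Once $\eps$ is fixed, this makes $k=\lceil T/\eps\rceil$ a constant as well. First I would bound the enumeration performed in Steps~\ref{alg:2} and~\ref{alg:3}: the number of subsets of $\{1,\ldots,n\}$ of size at most $k$ is $\sum_{i=0}^{k}\binom{n}{i} = O(n^k)$, and for each such subset the number of assignments of starting periods to its items is at most $T^k$, which is constant. Checking the feasibility of a single configuration for $\mbox{IKP}$ amounts to verifying the capacity constraints~\eqref{eq:capIKP} in each of the $T$ periods, which can clearly be done in polynomial time in $n$ and $T$.

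Next I would bound the work done for each surviving configuration in Step~\ref{alg:4}. Constructing the residual instance $\mbox{IKP}^R$ (the residual capacities $c^R_t$, the value $pcm$, and the earliest insertion times $t_j$ for $j\notin S_k$) is clearly polynomial in $n$ and $T$. Applying algorithm $A^{\prime}$ amounts to (i) solving the LP-relaxation of $\mbox{IKP}^R$ and rounding down, which is polynomial in $n$ and $T$ as noted earlier in Section~\ref{IKP_PTAS}, and (ii) running a FPTAS for KP in each of the $T$ periods, whose total cost is bounded by $O(T(n \log(1/\eps) + (1/\eps)^3 \log^2(1/\eps)))$. Since $T$ and $\eps$ are constants, the work per configuration is polynomial in $n$.

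Combining the two bounds, the overall running time is of the order $O(n^k)\cdot T^k$ times the cost per configuration, which is polynomial in $n$ for constant $T$ and fixed $\eps$. The only point that could be perceived as an obstacle is that the guessing step simultaneously enumerates both the identities of the $k$ items forming $S_k$ and their starting periods; however, the latter contributes only the constant factor $T^k$ and hence does not affect the polynomial dependence on $n$. Thus the estimate is essentially routine once the constancy of $k$ has been observed.
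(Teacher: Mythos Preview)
Your proposal is correct and follows essentially the same approach as the paper: both arguments count $O(n^k)$ subsets times $T^k$ starting-period assignments, observe that the per-configuration work of building $\mbox{IKP}^R$ and running $A^{\prime}$ is polynomial, and conclude by noting that $k=\lceil T/\eps\rceil$ is constant when $T$ is constant. The paper is slightly terser and ends by writing out the explicit overall bound $O\bigl(n^{\lceil T/\eps \rceil} T^{\lceil T/\eps \rceil}\cdot ( poly(n,T) + T(n \log(1/\eps) + (1/\eps)^3 \log^2(1/\eps)))\bigr)$, but the substance is identical.
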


\begin{proof}
The running time mainly depends on the number of configurations generated in Step~\ref{alg:3}.
There are $O(n^k)$ subsets $S_k$ and for each item $T$ possible starting periods, i.e.\ $O(T^k)$ possible choices for each $S_k$. 
For each such configuration algorithm \textit{$A^{\prime}$} is performed in Step~\ref{alg:4}.
This dominates the effort spent in Step~\ref{alg:2}.	
Plugging in the definition of $k$ the overall complexity is:
\begin{equation}
O\left(n^{\lceil \frac{T}{\eps} \rceil} T^{\lceil \frac{T}{\eps} \rceil}
\cdot
\left( poly(n,T) +
T\left(n \log(\frac{1}{\eps}) + (\frac{1}{\eps})^3 \log^2(\frac{1}{\eps})\right)\right)\right)
\end{equation}
\end{proof}

The following theorem describes the approximation ratio reached by \textit{Approx}.
\begin{theorem}
Algorithm \textit{Approx} is a PTAS for $\mbox{IKP}$ when $T$ is a constant.
\end{theorem}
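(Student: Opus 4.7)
My plan is to combine Proposition~\ref{ptas_complex} (which gives polynomial running time whenever $T$ is a constant) with a structural argument showing that $z^\eps \geq (1-\eps)z^*$. The core idea is that for any optimal solution $x^*$, the enumeration in Steps~\ref{alg:2}--\ref{alg:3} of \textit{Approx} picks out, for one of its guesses, the $k$ items with the largest profit contributions in $x^*$ together with exactly their starting periods in $x^*$; on the associated residual instance $\mbox{IKP}^R$, algorithm $A^{\prime}$ is then forced to deal only with items of small individual profit contribution, which is precisely the regime in which the rounding loss is controlled by \eqref{eq:zAprimeR}.

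First I would fix an optimal $x^*$, set $S^* = \{j : x^*_{jT}=1\}$, and denote by $pc_j^*$ the profit contribution of item $j \in S^*$ in $x^*$. If $|S^*|<k$, then Step~\ref{alg:2} enumerates $S^*$ with its starting periods exactly, so already $z^\eps \geq z^*$. Otherwise I pick $S_k^* \subseteq S^*$ consisting of $k$ items that realise the $k$ largest values of $pc_j^*$, and retain the starting periods they use in $x^*$; this configuration is one of those considered in Step~\ref{alg:3}, and for it $PC(k)=\sum_{j\in S_k^*}pc_j^*$ and $pcm=\min_{j\in S_k^*}pc_j^*$.

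Next I would verify that, on the residual instance $\mbox{IKP}^R$ associated with this guess, the items $S^*\setminus S_k^*$ equipped with their starting periods from $x^*$ form a feasible solution: each such item has profit contribution at most $pcm$ and hence a well-defined earliest insertion time no later than its starting period in $x^*$, while the residual capacities $c^R_t$ subtract exactly the weight that $S_k^*$ consumes in period $t$, so the remaining items still fit. This gives $z_R^* \geq z^* - PC(k)$. The key quantitative step is then to invoke \eqref{eq:z^B} and \eqref{eq:zAprimeR}: since every item admissible in $\mbox{IKP}^R$ has profit contribution at most $pcm$, rounding the LP solution of $\mbox{IKP}^R$ loses at most $T\cdot pcm$, so $z_R^{A^{\prime}}\geq z'_R \geq z_R^* - T\cdot pcm$. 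A simple averaging argument gives $pcm\leq PC(k)/k$, and with $k\geq T/\eps$ this forces $T\cdot pcm\leq \eps\cdot PC(k)$. Chaining the inequalities,
\[ z^\eps \geq PC(k) + z_R^{A^{\prime}} \geq PC(k) + z_R^* - \eps\cdot PC(k) \geq z^* - \eps\cdot PC(k) \geq (1-\eps)\,z^*. \]

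The step I expect to require most care is verifying that the guess $(S_k^*,\text{starting periods from } x^*)$ is actually one of those enumerated by \textit{Approx} and that $S^*\setminus S_k^*$ survives in $\mbox{IKP}^R$: concretely, checking that each $j\in S^*\setminus S_k^*$ has a well-defined $t_j$ that does not exceed its starting period in $x^*$, and that the monotone truncation $c^R_t:=\min\{c^R_t,c^R_{t+1}\}$ does not destroy feasibility. Both facts rely on the monotonicity $c_{t-1}\leq c_t$ and the incremental constraints \eqref{eq:PredIKP} inherited from $x^*$; they are the only places where the specific construction of $\mbox{IKP}^R$ in Section~\ref{IKP_PTAS} really enters, and therefore deserve explicit attention in the final write-up.
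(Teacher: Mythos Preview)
Your argument is correct, and it is in fact cleaner than the paper's own proof. The paper argues via a two-case split governed by an auxiliary parameter $f\in(0,1)$: in Case~1 ($PC(k)\geq f z^*$) it invokes the approximation ratio $\rho=(1-\eps)/T$ of $A'$ coming from the $T$ FPTAS calls; in Case~2 ($PC(k)<f z^*$) it uses the LP-rounding bound \eqref{eq:zAprimeR}. The value of $f$ is then tuned so that both cases yield $(1-\eps)z^*$. You bypass this entirely by observing that, for the ``correct'' guess $S_k^*$, the LP-rounding component $z'_R$ of $A'$ alone already suffices: since every item surviving in $\mbox{IKP}^R$ has profit contribution at most $pcm\leq PC(k)/k$, inequality \eqref{eq:zAprimeR} loses at most $T\cdot pcm\leq \eps\cdot PC(k)\leq \eps z^*$, and the chain $z^\eps\geq PC(k)+z_R^{A'}\geq PC(k)+z_R^*-\eps\,PC(k)\geq z^*-\eps\,PC(k)\geq(1-\eps)z^*$ closes directly. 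A pleasant by-product is that your proof shows the $T$ FPTAS calls inside $A'$ are not actually needed for the PTAS guarantee; the rounded LP-relaxation of $\mbox{IKP}^R$ is enough.

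The one point you flag as delicate---that $S^*\setminus S_k^*$ with the starting periods from $x^*$ is feasible for $\mbox{IKP}^R$ even after the monotone truncation of the $c_t^R$---is indeed the only nontrivial verification, but it goes through: the weight of $S^*\setminus S_k^*$ packed by time $t$ is nondecreasing in $t$ and bounded by the untruncated $c_{t'}^R$ for every $t'\geq t$, hence also by the truncated $c_t^R$. The check that each $j\in S^*\setminus S_k^*$ has $t_j\leq st_j$ is immediate from $pc_j^*\leq pcm$ and $w_j\leq c_{st_j}^R$. So nothing is missing.
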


\begin{proof}
We will show in this proof that the algorithm is an $\eps$--approximation scheme. 
Then, it follows from the time complexity stated in Proposition \ref{ptas_complex} that \textit{Approx} is a PTAS for $\mbox{IKP}$ when $T$ is a constant.

If $k = n$, namely $ n \leq \Bigl\lceil \frac{T}{\eps} \Bigr\rceil$, 
Steps~\ref{alg:2} or \ref{alg:3} of \textit{Approx} would even compute an optimal solution since all subsets of the $n$ items are enumerated, and therefore the statement is trivial.
Thus, it remains to consider the case where $k=\Bigl\lceil \frac{T}{\eps} \Bigr\rceil$.

We consider two cases depending on a parameter $f \in (0,1)$ and analyze the iteration of Step~\ref{alg:3} where $S_k$ and the associated starting periods $st_j$ yield the largest profit contribution $PC(k)$ of any subset of $k$ items in the optimal solution.
In the corresponding execution of Step~\ref{alg:4} we denote by $z_R^*$ and by $p_{max}^R$ the optimal solution value and the maximum profit of the items in the respective residual instance $R$. 

\paragraph{Case 1: $PC(k) \geq f\cdot z^*$} \mbox{} \\
Since the $k$ items with maximal profit contribution in the optimal solution
are considered, we have $z^* = z_R^* + PC(k)$. 
Using the approximation ratio $\rho$ of $A^{\prime}$, the following series of inequalities holds: 
\begin{equation}
\begin{aligned}
\label{case1}
&z^\eps \geq PC(k) + z_R^{A^{\prime}} \geq PC(k) + \rho\, z_R^* \\
& = PC(k) + \rho (z^* - PC(k)) = (1-\rho)PC(k) + \rho z^*\\
& \geq (1-\rho)fz^* + \rho z^* = ((1-\rho)f + \rho)\, z^* 
\end{aligned}
\end{equation}

\paragraph{Case 2: $PC(k) < f\cdot z^*$} \mbox{} \\
Since the profit contribution of any item in $\mbox{IKP}^R$ is bounded by $pcm$,
we also for any item $j$ in $\mbox{IKP}^R$ 
\begin{equation}\label{eq:pmaxR}
pc_j \leq pcm \leq \frac{1}{k} PC(k) <  \frac{1}{k} f\,z^*. 
\end{equation}
Then, the following series of inequality holds: 
\begin{eqnarray}
\label{eq:case2}
z^* &=& PC(k) + z_R^*\\
&\leq& PC(k) + z_R^{A^{\prime}} + T \cdot \max_{j \not\in S_k}\, pc_j \\
&\leq& PC(k) + z_R^{A^{\prime}} + T \cdot \frac{1}{k} f\,z^*\\
&\leq& z^\eps + T\frac{1}{k}f\,z^* \label{eq:lastline}
\end{eqnarray}
The first inequality comes from (\ref{eq:zAprimeR}),
the second inequality from (\ref{eq:pmaxR}).
Now (\ref{eq:case2})-(\ref{eq:lastline}) is equivalent to
\begin{equation}\label{eq:FinalCase2}
z^\eps \geq (1 - T\frac{1}{k}f)\,z^*. 
\end{equation}
Given $\eps$ and $\rho$, we now set $f$ as follows:
\begin{equation}
f := 1 - \frac{\eps}{1-\rho} = \frac{1-\rho - \eps}{1-\rho} \label{eq:setf}
\end{equation}
We easily note that $f < 1$ and we also have $f > 0$ since 
$\rho = \frac{1-\eps}{T}$ and thus $1-\eps > \rho$. 
For \textit{Case 1}, plugging the value of $f$ in (\ref{case1}) yields
\begin{equation}
\begin{aligned}
\label{eq:checkcase1}
&z^\eps \geq ((1-\rho)f + \rho)\, z^* \\
&= ((1-\rho - \eps) + \rho)\, z^* = (1 - \eps)\,z^*.
\end{aligned}
\end{equation}
For \textit{Case 2}, we plug in $k=\Bigl\lceil \frac{T}{\eps} \Bigr\rceil$ into (\ref{eq:FinalCase2}) 
and get
\begin{equation}\label{eq:checkcase2}
z^\eps \geq (1 - \eps f)\, z^* \geq (1 - \eps)\, z^*.
\end{equation}
\end{proof}

We remark that in \cite{BiSeYe13} a PTAS is introduced under the more general assumption that $T$ is in $O(\sqrt{\log n})$,
but only for \IIKP.
For constant $T$, it can be extended to decreasing time multipliers $\Delta_t$.
Furthermore, in \cite{FaMa17} a PTAS is given for \IIKP for arbitrary $T$
and for $\mbox{IKP}$ with increasing time multipliers.
All these PTASs rely on solving a huge number
of non--trivial LP models. 
Our approach works for general multipliers $\Delta_t$ and does not require the solution of LP models.

\section{Approximation algorithms for a weight constrained IKP variant}
\label{sec:RestrIKP}
There are two essential decisions in $\mbox{IKP}$, namely whether to select an item at all,
and -- if yes -- when to select it.
Thus, it seems natural not to restrict the latter decision to a subset of time periods
and allow each item to be selected at any point in time, also in period $1$.
Therefore, in the reminder of the paper we will consider $\mbox{IKP}$ under the mild and natural assumption 
that each item can be packed in the first period, i.e.\ $w_i \leq c_1$, $i=1,\dots,n$. 
We refer to this weight constrained variant of the problem as $\mbox{IKP}^\prime$ and to the related version with unit time multipliers $\Delta_t=1$ for all $t$ as $\mbox{IIKP}^\prime$.

\medskip
Let us denote by $s_t$ the split items in the linear relaxation of each $KP_t$ for $t=1,\dots,T$, which also correspond to the fractional items in the optimal solution of $\mbox{IKP}^\prime_{LP}$. 
We state the following algorithm $H_1$, which is independent from the given multipliers $\Delta_t$.

\begin{algorithm}[H]
\begin{algorithmic}[1]
\STATEx \textbf{Input:} $\mbox{IKP}^\prime$ instance.
\STATE Sort items by decreasing $\frac{p_i}{w_i}$ $(i = 1,\dots,n)$ and solve $\mbox{IKP}^\prime_{LP}$.
\STATE 
Let $\hat{t} = \min\{ t \mid c_t \geq \sum_{j=1}^{s_1} w_j,\, t=1,\ldots, T \}$. \newline
Let $p=\max \left \{\sum_{j=1}^{s_1 - 1}  p_j, p_{s_1} \right \} $.
\STATE Pack the item(s) yielding profit $p$ in time periods $1$ to $\hat{t}-1$.\newline
Pack all items $j=1,\dots, s_1$ in time periods $\hat{t}$ to $T$.
\STATE Add the remaining items of the optimal solution of $\mbox{IKP}^\prime_{LP}$ 
ignoring all fractional values.   
\end{algorithmic}
\caption{\textbf{Algorithm $H_1$}}
\end{algorithm}

The following theorem holds.
\begin{theorem}
Algorithm $H_1$ has a tight $\frac{1}{2}$-approximation ratio for $\mbox{IKP}^\prime$
for any choice of multipliers $\Delta_t$.
\end{theorem}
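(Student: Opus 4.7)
My plan is to bound $z^{H_1}$ from below by comparing the profit contributed by the algorithm in each time period against the corresponding profit of the LP relaxation $\mbox{IKP}^\prime_{LP}$, which is an upper bound on $z^*$. Let $L_t := \sum_{j=1}^{s_t - 1} p_j + \alpha_t p_{s_t}$ be the LP profit in period $t$ (with $\alpha_t$ the fractional value of item $s_t$) and let $H_t$ be the profit of the algorithm's solution in period $t$. If I can show $H_t \geq L_t/2$ for every $t$, then $z^{H_1} = \sum_t \Delta_t H_t \geq \tfrac{1}{2}\sum_t \Delta_t L_t = \tfrac{1}{2} z_{LP} \geq \tfrac{1}{2} z^*$ uniformly in the multipliers. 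I may assume $s_1 \leq n$ (otherwise all items fit in $c_1$, the instance is trivial); under the $\mbox{IKP}^\prime$ assumption this also forces $s_1 \geq 2$.

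The case analysis hinges on the position of $\hat{t}$. For $t < \hat{t}$, the capacity $c_t$ satisfies $\sum_{j=1}^{s_1 - 1} w_j \leq c_t < \sum_{j=1}^{s_1} w_j$, so $s_t = s_1$ and $L_t = B_1 + \alpha_t B_2 \leq B_1 + B_2$, where $B_1 = \sum_{j=1}^{s_1-1} p_j$ and $B_2 = p_{s_1}$. Step~3 of $H_1$ contributes $H_t = p = \max(B_1, B_2) \geq (B_1 + B_2)/2$, giving $H_t \geq L_t/2$. For $t \geq \hat{t}$, step~3 packs items $\{1, \ldots, s_1\}$ and step~4 then adds items $s_1{+}1, \ldots, s_t{-}1$, these being the integer components of the LP in period $t$ beyond $\{1, \ldots, s_1\}$; feasibility is preserved since the total weight amounts to $\sum_{j=1}^{s_t-1} w_j \leq c_t$. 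Hence $H_t \geq \sum_{j=1}^{s_t-1} p_j$.

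The main obstacle is then to show $p_{s_t} \leq \sum_{j=1}^{s_t - 1} p_j$ for $t \geq \hat{t}$, which would yield $L_t \leq 2 \sum_{j=1}^{s_t-1} p_j \leq 2 H_t$. Here the $\mbox{IKP}^\prime$ assumption $w_j \leq c_1$ enters decisively. Since $c_t \geq c_{\hat{t}} \geq \sum_{j=1}^{s_1} w_j$, item $s_1$ fits fully in period $t$, so $s_t \geq s_1 + 1$ and
\[
\sum_{j=1}^{s_t-1} w_j \;\geq\; \sum_{j=1}^{s_1} w_j \;>\; c_1 \;\geq\; w_{s_t}.
\]
Combining this with the efficiency ordering $p_j/w_j \geq p_{s_t}/w_{s_t}$ for $j \leq s_t-1$ gives $\sum_{j=1}^{s_t-1} p_j \geq (p_{s_t}/w_{s_t}) \sum_{j=1}^{s_t-1} w_j > p_{s_t}$, closing the ratio argument.

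For tightness I would take $n=3$ items with $p_1 = w_1 = 1$ and $p_2 = w_2 = p_3 = w_3 = M$, together with $T$ periods and capacities $c_t = 2M$ for $t \leq T-1$ and $c_T = 2M+1$. Then $s_1 = 3$, $\hat{t} = T$ and $p = 1 + M$, so the algorithm returns $z^{H_1} = (1+M)\sum_{t=1}^{T-1}\Delta_t + (2M+1)\Delta_T$. The optimum instead packs $\{2,3\}$ in the first $T-1$ periods (weight exactly $2M$) and all three items in period $T$, giving $z^* = 2M\sum_{t=1}^{T-1}\Delta_t + (2M+1)\Delta_T$. For any fixed choice of multipliers, taking $M$ large and allowing the first $T-1$ periods to dominate drives the ratio $z^{H_1}/z^*$ arbitrarily close to $1/2$, confirming tightness regardless of $\Delta_t$.
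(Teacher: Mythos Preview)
Your lower-bound argument is correct and mirrors the paper's reasoning almost exactly: the period-by-period comparison, the identification of $s_t=s_1$ for $t<\hat t$, and the crucial use of the $\mbox{IKP}^\prime$ assumption to obtain $\sum_{j=1}^{s_t-1}w_j>c_1\geq w_{s_t}$ and hence $\sum_{j=1}^{s_t-1}p_j>p_{s_t}$ for $t\geq\hat t$ are the same ideas as in the paper (the paper compares against the integer optima $z_t$ rather than the LP values $L_t$, but since $L_t\geq z_t$ your version is if anything slightly stronger and equally valid).

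The tightness argument, however, has a genuine gap. The theorem asserts tightness \emph{for every fixed choice} of multipliers $\Delta_1,\ldots,\Delta_T$, so you must produce, for any given $(\Delta_t)_{t=1}^T$, instances whose ratio tends to $\tfrac12$. Your family does not do this: with $M\to\infty$ the ratio of your instance tends to
\[
\frac{\sum_{t=1}^{T-1}\Delta_t+2\Delta_T}{2\sum_{t=1}^{T-1}\Delta_t+2\Delta_T},
\]
which equals $\tfrac12$ only when $\Delta_T=0$. For example, with $T=2$ and $\Delta_1=\Delta_2=1$ your ratio converges to $\tfrac34$. The phrase ``allowing the first $T{-}1$ periods to dominate'' is not a legal move once the multipliers are fixed. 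A secondary issue is that all three of your items have identical efficiency $p_i/w_i=1$, so the sorted order (and hence $s_1$ and $p$) is not determined by the algorithm; under the ordering $2,3,1$ the algorithm would actually pack $\{2,3\}$ and match the optimum. The paper avoids both problems by perturbing profits so that the efficiency order is forced and by making the construction independent of $\hat t$ (in fact $\hat t>T$ there), so that the ratio $\tfrac12+(T{+}1)/M$ holds uniformly in the multipliers.
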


\begin{proof}
Consider the optimal profit values $z_t$ of $KP_t$ for $t=1,\dots,T$ (without multipliers). 
From the properties of KP and 
since items are ordered by decreasing $\frac{p_i}{w_i}$, we have:
\begin{align}
\max \left \{\sum\limits_{j=1}^{s_t - 1}  p_j,p_{s_t} \right \} \geq \frac{1}{2}\, z_t \qquad t=1,\dots,T \label{eq:genKP}
\end{align}
\begin{align}
\frac{\sum_{j=1}^{s_t - 1} p_j }{\sum_{j=1}^{s_t - 1} w_j} \geq \frac{p_{s_t}}{w_{s_t}} \qquad t=1,\dots,T \label{eq:orditems} 
\end{align}
Algorithm $H_1$ yields a solution with value 
\begin{align}
\label{eq:H_1sol}
z^{H_1} = \sum_{t=1}^{\hat{t} - 1} \Delta_t \cdot \max \left \{\sum\limits_{j=1}^{s_1 - 1} p_j,p_{s_1} \right \} 
+ \sum_{t=\hat{t}}^{T} \Delta_t \cdot \sum\limits_{j=1}^{s_t - 1}  p_j.
\end{align}
Since inequalities $\sum_{j=1}^{s_{t} - 1} w_j \geq \sum_{j=1}^{s_1}w_j > c_1 \geq w_i$ hold for any item $i$ and $t \geq \hat{t}$, from (\ref{eq:orditems}) we get $\sum_{j=1}^{s_t - 1}  p_j > p_{s_t}$ for any $t \geq \hat{t}$ and thus
\begin{align}
\label{eq:afterC1}
\max \left \{\sum\limits_{j=1}^{s_t - 1}  p_j,p_{s_t} \right \} = \sum\limits_{j=1}^{s_t - 1}  p_j \qquad t=\hat{t},\dots,T.
\end{align}
Considering (\ref{eq:genKP}), (\ref{eq:H_1sol}), (\ref{eq:afterC1}) and that 
$\sum_{t=1}^{T} \Delta_t z_t$ is an upper bound on $z^*$, we get
\begin{align}
\label{eq:1over2}
z^{H_1} \geq  \frac{1}{2}\sum_{t=1}^{\hat{t} - 1} \Delta_t z_t + \frac{1}{2}\sum_{t=\hat{t}}^{T} \Delta_t z_t \geq \frac{1}{2}z^*
\end{align}
which shows that algorithm $H_1$ has an approximation ratio of $\frac{1}{2}$.

\medskip
To prove the tightness of the bound, consider the following instance with $c_t = M + t$ for $t=1,\dots,T$ (with integer $M \gg T$) and 3 items with entries
$$p_1 = \frac{M}{2} + \delta, w_1 = \frac{M}{2}; \;  
p_2 = \frac{M}{2} + T + 1, w_2 =  \frac{M}{2} + T + 1; \;
p_3 = \frac{M}{2} - \delta, w_3 =  \frac{M}{2};$$
with $\delta > 0$ being an arbitrary small number. 
Algorithm $H_1$ will select only the second item over all time periods getting a solution with value $\sum_{t=1}^T \Delta_t (\frac{M}{2} +  T + 1)$. 
The optimal $\mbox{IKP}^\prime$ solution will pack items 1 and 3 for all periods
independently from the multipliers $\Delta_t$
reaching a value of $\sum_{t=1}^T \Delta_t \cdot M$. 
Hence, the approximation ratio of the algorithm is
 $$ \frac{\sum_{t=1}^T \Delta_t(\frac{M}{2} + T + 1)}{\sum_{t=1}^T \Delta_t M} = \frac{1}{2} + \frac{T + 1}{M}$$
which reaches a value arbitrarily close to $\frac{1}{2}$ for large values of $M$
and arbitrary multipliers $\Delta_t$.
\end{proof}

Given the approximation ratio of this simple heuristic, it is worth to investigate to what extent computationally more involved iterative approaches could improve upon the worst case performance of algorithm $H_1$. 
We consider an algorithm, denoted as $H_2$, which also considers the periods from $t=1$ to $T$ but computes in each step an {\em optimal} knapsack solution instead of the rounded down LP-relaxation. 

\begin{algorithm}[H]
\renewcommand{\thealgorithm}{}
\begin{algorithmic}[1]
\STATEx \textbf{Input:} $\mbox{IKP}^\prime$ instance.
\STATE Solve $KP_1$ to optimality and pack \KPone 
in all periods $1,\ldots, T$.
\STATE For all periods $t=2, \ldots, T$:\newline
Solve to optimality the knapsack problem induced by $KP_t$ 
with the condition that all items packed in previous periods
remain packed in period $t$.\newline
Pack the additional item set  in period $t$.
\end{algorithmic}
\caption{\textbf{Algorithm $H_2$}}
\end{algorithm}

Contrary to our expectation, we can show the following negative result for algorithm $H_2$.
It turns out quite surprisingly that algorithm $H_2$,
which locally dominates $H_1$ for each period, yields a worse performance ratio for the overall problem for almost any choice of multipliers.
Only for extremely large $\Delta_1$, namely if
$\Delta_1 > \sum_{t=3}^T \Delta_t (t-2) $,
$H_2$ outperforms $H_1$ from a worst case perspective.

\begin{theorem}
Algorithm $H_2$ has a tight approximation ratio of $\frac{\sum_{t=1}^T \Delta_t}{\sum_{t=1}^T \Delta_t\, t} \geq \frac 1 T$ for $\mbox{IKP}^\prime$.
\end{theorem}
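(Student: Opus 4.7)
I would split the proof in two parts: establishing the approximation ratio lower bound, then witnessing tightness by an explicit construction.

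For the lower bound, the plan is to mirror the LP-based analysis of Theorem~\ref{GenTight}. The first key observation is that $S_1=\KPone$ is never removed by $H_2$, so $p_t^{H_2}\geq z_1$ for every $t$, whence $z^{H_2}\geq z_1\sum_{t=1}^T \Delta_t$. The second observation exploits the $\mbox{IKP}^\prime$ hypothesis that each item weighs at most $c_1$: the items of an optimal knapsack solution at period $t$ admit a partition into $t$ groups each of total weight at most $c_1$, so each group's profit is at most $z_1$ and consequently $z_t\leq t\,z_1$. Combining with the trivial bound $z^*\leq \sum_{t=1}^T \Delta_t z_t\leq z_1\sum_{t=1}^T t\,\Delta_t$ yields $z^{H_2}/z^*\geq \frac{\sum\Delta_t}{\sum t\Delta_t}$, which is at least $1/T$ since $t\leq T$ in the denominator. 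To make this argument fully in line with Theorem~\ref{GenTight}, I would encode these inequalities in a compact LP with variables representing the period-wise values $h_t\sim z_t$ and $h^{H_2}\sim z^{H_2}$ and read off the ratio via strong duality.

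For tightness, I would adapt the ``locking'' idea already used in the $1/2$-tightness proof for $H_1$. In the $T=2$ case, take $c_1=1$ and $c_2=2w$ for some $w\in (1/2,1)$; let item~1 have $w_1=1,\,p_1=1+\delta$ and let items~2 and 3 have weight $w$ and profit $1$. The pair $\{2,3\}$ is infeasible for $KP_1$ since $2w>1$, so $\KPone=\{1\}$; and because $c_2-w_1=2w-1<w$, algorithm $H_2$ cannot augment $\{1\}$ in period~2, whereas the optimum packs $\{2,3\}$ for profit~$2$. Letting $\delta\to 0$ gives a ratio approaching $(\Delta_1+\Delta_2)/(\Delta_1+2\Delta_2)$ for arbitrary multipliers. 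To extend to $T\geq 3$ I would chain this locking pattern across successive periods: scale the capacities and introduce additional light items so that the optimum can progressively grow its packing by one item per period, while preserving the property that $c_t-w_1$ stays below the smallest weight of any item outside $S_1$, keeping $H_2$ stuck throughout.

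The main obstacle will be the partitioning step in the lower bound: proving that the items of an optimal $KP_t$ always admit a partition into at most $t$ subsets each of weight at most $c_1$. When capacities grow faster than linearly, so that $c_t>t\,c_1$, a direct partition is impossible in general, and a more delicate argument---perhaps a case split on capacity growth combined with a sharper bound on $p_t^{H_2}$ when $H_2$ enjoys large residual capacity after packing $S_1$, or an LP formulation that absorbs both regimes simultaneously---will be required. A second technical difficulty, arising only in the $T\geq 3$ tightness construction, is preserving $\KPone=\{1\}$ as more light items are added for the optimum to pack: the profit of item~1 must be tuned to dominate every feasible subset of small items in $KP_1$ without making $\{1\}$ so valuable that the optimum itself collapses onto it.
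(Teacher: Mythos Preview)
Your lower-bound argument has a genuine gap, and it is exactly the one you flag: the inequality $z_t\leq t\,z_1$ is false in $\mbox{IKP}^\prime$. Take $c_1=1$, $c_2=100$, and one hundred items each with $w_i=p_i=1$; then $z_1=1$ but $z_2=100\gg 2z_1$. Thus the chain $z^*\leq\sum_t\Delta_t z_t\leq z_1\sum_t t\Delta_t$ breaks, and encoding the same inequalities in an LP does not rescue it. Your suggested patch (``sharper bound on $p_t^{H_2}$ when $H_2$ has large residual capacity'') points in the right direction but is not an argument yet.

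The paper avoids $z_t$ altogether. It compares, period by period, the actual $\mbox{IKP}^\prime$ optimum $O_t$ with the $H_2$ solution $I_t$, writing $\pi_t^*$ and $\pi_t$ for the profit \emph{added} in period $t$ by each. The core step is a packing/exchange argument: from the items available to $H_2$ at period $t{+}1$ one builds an auxiliary set $I'$ of weight just exceeding $c_{t+1}-c_t$ and shows $p(I')\geq\pi_{t+1}^*$; removing one item from $I'$ gives a feasible addition for $H_2$, so $\pi_{t+1}\geq\pi_{t+1}^*-p_{\max}$. Telescoping yields $p(O_t)\leq p(I_t)+(t-1)p_{\max}$, and combining with the trivial $z^{H_2}\geq p_{\max}\sum_t\Delta_t$ gives the ratio. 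This is where the real work lies, and your proposal contains no analogue of it.

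For tightness your $T=2$ instance is fine, and the ``locking'' intuition is the right one. The paper's general construction is concrete: $c_t=tM+(T-t)$, one blocking item with $w_1=p_1=M+(T-1)$, then $T{-}1$ filler items of weight $M{-}1$ and profit $1$, and $T$ items of weight and profit $M$. Since $c_{t+1}-c_t=M{-}1$, after packing item~1 algorithm $H_2$ can add only one profit-$1$ filler per period, while the optimum packs $t$ of the weight-$M$ items in period $t$. This simultaneously handles both difficulties you anticipate for $T\geq 3$.
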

Note that for $\mbox{IIKP}^\prime$ 
we get a ratio of $\frac{2}{T+1}$.

\begin{proof}
We first show the lower bound on the performance ratio of $H_2$. 
For every period $t=1,\ldots, T$ we denote by $\pi^*_t$ (resp.\ $\pi_t$) the 
total profit of the items added in period $t$ in the optimal solution of $\mbox{IKP}^\prime$ (resp.\ added by the optimal solution of the residual knapsack problem solved in $H_2$). 
Note that $\pi_1$ is the optimal solution value of $KP_1$. 
For every period $t$ denote by $O_t$ all items selected in period $t$ in the optimal ILP solution and by $I_t$ all items packed by $H_2$
(independently from their starting period and without accounting for the multiplier $\Delta_t$).
The total profit values contributed in period $t$ are thus $p(O_t) = \sum_{j=1}^t \pi^*_j$ and $p(I_t) = \sum_{j=1}^t \pi_j$ respectively.

Now we compare the optimal solution with the outcome generated by $H_2$.
By construction there is $\pi^*_1 \leq \pi_1$. 
We will prove the crucial claim that in each successive time period an additional profit deviation of at most $p_{\max}$ can be accrued, namely:
\begin{equation}\label{eq:indclaim}
\pi^*_{t} - \pi_{t}  \leq  p_{\max} \quad t=2,\ldots, T
\end{equation}
For any period $t \leq T - 1$, we partition the items added in period $t+1$ and contributing to the optimal profit value $\pi^*_{t+1}$ into a set $I^h$, 
which are items also in $I_t$, and a set $I^n$ consisting of items not yet packed by $H_2$.

We now estimate the value of $\pi_{t+1}$ by constructing an auxiliary set $I'$ as follows. Let $I^p:=O_t \setminus I_t$. All items in $O_{t+1} \setminus I_t = I^p \cup I^n$ are clearly available for $\pi_{t+1}$ (possibly among many others). \\
We fill items into $I'$ and stop as soon as $w(I') > c_{t+1} -c_t$. This is done by first considering the items in $I^n$ and then items in $I^p$, both in arbitrary order.

In the unusual case that we do not reach the stopping criterion
we would have $w(I') = w(I^n) + w(I^p) \leq c_{t+1} - c_t$ 
and thus $\pi_{t+1} \geq p(I')$. But then we have $O_{t+1} = I' \cup I_t$ 
and thus even $p(I_{t+1}) \geq p(O_{t+1})$.\\
Otherwise, after removing again the item added most recently into $I'$,
we have a solution at hand which can be packed by $H_2$
since its weight is at most $c_{t+1} -c_t$.
We will prove the following inequality 
\begin{equation}\label{eq:claim}
p(I') \geq \pi^*_{t+1}.
\end{equation}
From (\ref{eq:claim}) our original claim (\ref{eq:indclaim}) follows
because removing an item from $I'$ gives a lower bound for $\pi_{t+1}$ and thus we would have $\pi_{t+1} \geq  p(I') - p_{\max} \geq \pi^*_{t+1} - p_{\max}$.

To prove (\ref{eq:claim}), we first show that $I'$ contains all items of $I^n$.
Assume otherwise that some item $i \in I^n$ is not in $I'$. 
Then we have 
$$c_{t+1} \geq w(O_t) + w(I^n) \geq w(O_t)+ w(I') +w_i >
w(O_t) + c_{t+1} -c_t + w_i\, ,$$
which means that $w(O_t) + w_i < c_t$
and thus item $i$ could have been added already to $O_t$ in contradiction to optimality.

Next we show that $w(I^h)$ cannot be too large.
Denote as $I'':= I' \cap I^p$, i.e.\ the items added to $I'$ which were already 
included in $O_t$. Clearly $I' = I'' \cup I^n$.
If $w(I^h) \geq c_t - w(O_t \setminus I'')$ then we have:
\begin{eqnarray}
w(O_{t+1}) &=& w(O_t \setminus I'') + w(I'') +  w(I^h)+w(I^n) \\
&=& w(O_t \setminus I'') + w(I^h)+ w(I')\\
&>& w(O_t \setminus I'') + w(I^h) + c_{t+1} - c_t\\
&\geq & w(O_t \setminus I'') + c_t - w(O_t \setminus I'') + c_{t+1} - c_t = c_{t+1}
\end{eqnarray}
which is a contradiction. Hence, we must have that $w(I^h) < c_t - w(O_t \setminus I'')$. \\
Therefore, $I^h$ could be used in $O_t$ instead of $I''$. The fact that this was not done by the optimal solution implies that $p(I^h) \leq p(I'')$ which is equivalent to
$\pi^*_{t+1} = p(I^h) + p(I^n) \leq p(I'') + p(I^n) = p(I')$ and (\ref{eq:claim}) is shown.

\smallskip
Thus, considering (\ref{eq:indclaim}), for any period $t \geq 2$ we have 
\begin{equation} 
\label{FinIneq}
\pi^*_1 + \sum_{j=2}^t (\pi^*_j - \pi_j) \leq \pi_1 + (t-1)p_{\max} \implies p(O_t) \leq p(I_t) + (t-1)p_{\max} 
\end{equation}
Summing up (\ref{FinIneq}) over all periods we get
\begin{eqnarray}
z^* &=& \sum_{t=1}^T \Delta_t \, p(O_t) \\
&\leq & \sum_{t=1}^T \Delta_t \left(p(I_t) + (t-1)p_{\max}\right) \\
&=& z^{H_2} + p_{\max} \sum_{t=1}^T \Delta_t(t-1) \\
& \leq& z^{H_2} + \frac{z^{H_2}}{ \sum_{t=1}^T \Delta_t} \cdot \sum_{t=1}^T \Delta_t(t-1) \label{AverZH2}
= z^{H_2} \cdot  \frac{\sum_{t=1}^T \Delta_t\, t}{ \sum_{t=1}^T \Delta_t}
\end{eqnarray}
which shows the stated approximation ratio. 
The last inequality (\ref{AverZH2})  derives from the trivial fact that 
$\pi_1 \geq p_{\max}$ and thus $z^{H_2}  \geq p_{\max} \sum_{t=1}^T \Delta_t$.

The minimum of this bound is reached for multipliers
$\Delta_1 = \ldots = \Delta_{T-1} = \eps$ for small $\eps>0$ and $\Delta_T=1$
where the approximation ratio tends to $\frac 1 T$.
 
\medskip
We can prove the tightness of the bound for arbitrary multipliers $\Delta_t$ by considering the following instance with capacities $c_t = t\cdot M + (T-t)$ and $2\,T$ items with entries 
\begin{eqnarray*}
w_1 = p_1 &=& M + (T-1),\\
w_2 = w_3 = \cdots = w_T &=& M-1,\\
p_2 = p_3 = \cdots = p_T &=& 1,\\
w_{T+1} = w_{T+2} = \cdots = w_{2T} &=& M,\\
p_{T+1} = p_{T+2} = \cdots = p_{2T} &=& M.
\end{eqnarray*}
Note that $c_{t+1} - c_t = M-1$. The solution provided by algorithm $H_2$ starts with $KP_1$ and packs item $1$. For $KP_2$ the residual capacity is $c_2-c_1= M-1$, so only item $2$ with profit $1$ can be packed. This argument continues for each $KP_t$.
Hence, we have as solution value
$$z^{H_2} = \sum_{t=1}^T \Delta_t \cdot (M+(T-1)) + \sum_{t=2}^T \Delta_t (t-1).$$
The optimal $\mbox{IKP}^\prime$ solution packs $t$ copies of items with weight and profit $M$ in each time period $t$ which yields
$$z^* = \sum_{t=1}^T \Delta_t \, t \cdot M. $$
This shows the tightness of the approximation bound for large $M$. 
\end{proof}

\medskip
Mirroring the iterative strategy of algorithm $H_2$, one may consider another natural heuristic approach which starts from the optimal solution of the last knapsack problem $KP_T$ and moves upwards until period $1$ by solving to optimality the knapsack problem induced by $KP_t$ $(t = T -1, \dots, 1)$ with item set restricted to those items packed in period $t+1$. 
Still, it turns out that this strategy cannot improve the approximation ratio of the polynomial time heuristic $H_1$. 
We show this observation
by giving a class of instances where the above algorithm reaches only $z^*/3$.
Consider the following $\mbox{IIKP}^\prime$ instance with 
capacities $c_t = 4 - \gamma$, ($t=1, \dots, T - 2$), $c_{(T - 1)} = 4 + \gamma$, $c_{T} = 5$ (with $\gamma > 0$ being an arbitrary small number) and four items with entries: 
\begin{table}[H]
	\centering
\begin{tabular}{|l|cccc|}
  \hline
$j$ & 1 & 2 & 3 & 4 \\ \hline
$p_j$ & $1 + \gamma$ & $1 + \gamma$ & 1 & 2 \\
$w_j$ & 2 & 2 & 1 &  $3 - \gamma$ \\ \hline 
\end{tabular}
\end{table}
The sketched alternative approach will select items 1, 2, 3 in the last period $T$, items 1 and 2 in period $T - 1$ and either item 1 or item 2 in the remaining periods. 
The corresponding profit is equal to 
$$(3+2\gamma) + (2+2\gamma) + (1+\gamma)(T-2) = (1+\gamma) T + 3 +2\gamma.$$
The optimal solution instead consists in packing items 3 and 4 in all periods, hence $z^{*} = 3T$. 
Consequently, the approximation ratio of the algorithm is strictly  less than $\frac{1}{2}$ for large enough $T$ 
and cannot be greater than $\frac{1}{3}$ as the number of periods $T$ increases.

\section{An approximation algorithm for a weight constrained IKP variant with two periods
\label{sec:RestrIKP23}}

In this section we consider $\mbox{IKP}^\prime$ for the special cases with two periods only, namely $T=2$. It turns out that a more careful analysis of the solution structures of the knapsack subproblems $KP_t$ by means of an LP model gives much better approximation ratios than the more general results presented in Section~\ref{sec:RestrIKP}. 
Let us define for the optimal solution sets \KPone and \KPsec the following subsets illustrated in Figure~\ref{FigureSi}: 
\begin{itemize}
\item $S_{12}$: subset of items included in both optimal solutions \KPone and \KPsec;
\item $S_1$: remaining subset of items in \KPone; 
\item $S_2^a$: remaining subset of items not exceeding capacity $c_1$ in \KPsec;
\item $S_2^\prime$: first item exceeding $c_1$ in \KPsec;
\item $S_2^b$: remaining subset of items in \KPsec;
\end{itemize}
Each subset could as well be empty. The dashed lines in Figure \ref{FigureSi} refer to the item $S_2^\prime$ which exceeds the first capacity value. 

\begin{figure}[h]
\centering
\small
\begin{tikzpicture} 
\coordinate (c) at (0.6,0);

\draw  ($(c)+(0,1)$) rectangle  ($(c)+(2,1.5)$);
\node at  ($(c)+(1,1.25)$) {$S_{12}$}; 

\draw  ($(c)+(2,1)$) rectangle ($(c)+(5,1.5)$);
\node at  ($(c)+(3.5,1.25)$) {$S_1$}; 

\draw  ($(c)+(2,0)$) rectangle ($(c)+(4.5,0.5)$);
\node at ($(c)+ (3.25,0.25)$) {$S_2^a$}; 

\draw [dashed]  ($(c)+(4.5,0)$) rectangle ($(c)+(5.5,0.5)$);
\node at  ($(c)+(5,0.25)$) {$S_2^\prime$}; 

\draw  ($(c)+(5.5,0)$) rectangle ($(c)+(7,0.5)$);
\node at  ($(c)+(6.25,0.25)$) {$S_2^b$}; 

\draw  ($(c)+(0,0)$) rectangle ($(c)+(2,0.5)$);
\node at  ($(c)+(1,0.25)$) {$S_{12}$};

\node at (0,1.25) {$KP_1:$};
\node at (0,0.25) {$KP_2:$};

\node at  ($(c)+(5.0,1.75)$) {$c_1$};
\node at  ($(c)+(7.0,0.75)$) {$c_2$};
\end{tikzpicture}
\caption{Decomposition of the optimal solutions of $KP_1$ and $KP_2$.} \label{FigureSi} 
\end{figure}
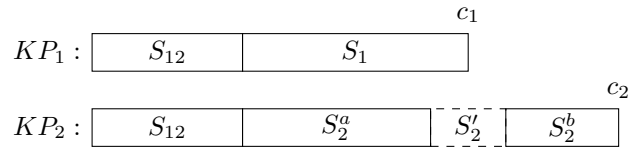

According to the above definitions, we have the following inequalities:
\begin{align}
& w(S_{12}) + w(S_1) \leq c_1 \label{CaseW1} \\
& w(S_{12}) + w(S_2^a) \leq c_1 \label{CaseW2} \\
& w(S_{12}) + w(S_2^a) + w(S_2^\prime) > c_1 \label{CaseW3}\\
& w(S_{12}) + w(S_2^a) + w(S_2^\prime) + w(S_2^b) \leq c_2 \label{CaseW4}\\
& w(S_{12}) + w(S_1) + w(S_2^b) < c_2 \label{CaseW5}
\end{align}
Inequality (\ref{CaseW5}) derives directly from inequalities (\ref{CaseW1}), (\ref{CaseW3}) and (\ref{CaseW4}).
The optimal solution values of $KP_1$ and $KP_2$ are 
$z_1 = p(S_{12}) + p(S_1)$ and  $z_2 = p(S_{12}) + p(S_2^a) + p(S_2^\prime) + p(S_2^b)$ respectively. 
Now we can state three feasible solutions for $\mbox{IKP}^\prime$: 
\begin{enumerate}[label=\alph*)]
\item \label{1Sol} \KPone 
in the two periods plus the additional packing of items in $S_2^b$ in the second period with total profit $$\Delta_1(p(S_{12}) + p(S_1)) + \Delta_2(p(S_{12}) + p(S_1) + p(S_2^b));$$
\item \label{2Sol} \KPsec 
in the second period with the packing of items in subsets $S_{12}$ and $S_2^a$ in the first period and resulting profit $$\Delta_1(p(S_{12}) + p(S_2^a)) + \Delta_2(p(S_{12}) + p(S_2^a)+p(S_2^\prime) + p(S_2^b));$$ 
\item \label{3Sol} \KPsec 
in the second period with item $S_2^\prime$ placed in the knapsack in the first period. The profit of this solution is $$\Delta_1p(S_2^\prime) + \Delta_2(p(S_{12}) + p(S_2^a)+ p(S_2^\prime) + p(S_2^b)).$$
\end{enumerate}
We introduce an algorithm, hereafter denoted as $H_{T2}$, which simply returns the best of these solutions. 
\begin{algorithm}[H]
\begin{algorithmic}[1]
\STATEx \textbf{Input:} $\mbox{IKP}^\prime$ instance with $T=2$.
\STATE Compute \KPone and \KPsec. 
\STATE Identify subsets $S_{12}, S_1, S_2^a, S_2^\prime, S_2^b$ and compute solutions \ref{1Sol}, \ref{2Sol}, \ref{3Sol}.
\STATE Return the best solution found.
\end{algorithmic}
\caption{\textbf{Algorithm $H_{T2}$}}
\end{algorithm}

\begin{theorem}
\label{Theo:T=2}
For $\mbox{IKP}^\prime$ with $T=2$, algorithm $H_{T2}$ has a tight approximation ratio of $\frac{1 + 3\Delta_r + 2\Delta_r^2}{1 + 4\Delta_r + 2\Delta_r^2}$, where $\Delta_r = \frac{\Delta_2}{\Delta_1}$.
\end{theorem}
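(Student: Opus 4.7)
The plan is to mirror the LP-based strategy of Theorem~\ref{GenTight}: formulate a small linear program whose optimal value provides a lower bound on $z^{H_{T2}}/z^*$ valid for every $\mbox{IKP}^\prime$ instance with $T=2$, solve it, and then exhibit a parametric family of instances attaining the bound in the limit.

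First I would introduce the non-negative variables $A = p(S_{12})$, $B = p(S_1)$, $C = p(S_2^a)$, $D = p(S_2^\prime)$, $E = p(S_2^b)$, together with a variable $h$ standing for a lower bound on $z^{H_{T2}}$. Writing $\Delta_r = \Delta_2/\Delta_1$, the three candidates produced by $H_{T2}$ translate into the constraints
\[
h \geq (1+\Delta_r)(A+B)+\Delta_r E,\quad h \geq (A+C)+\Delta_r(A+C+D+E),\quad h \geq D+\Delta_r(A+C+D+E).
\]
As a normalization I would use the trivial bound $z^* \leq \Delta_1 z_1 + \Delta_2 z_2 = (A+B)+\Delta_r(A+C+D+E)$, set equal to $1$. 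Two structural constraints come from KP-optimality: $B \geq C$, since $S_{12}\cup S_2^a$ is a feasible $KP_1$ packing by~(\ref{CaseW2}); and $C+D \geq B$, since $S_{12}\cup S_1\cup S_2^b$ is a feasible $KP_2$ packing by~(\ref{CaseW5}). Minimizing $h$ subject to these constraints yields the desired lower bound on the approximation ratio.

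Next I would solve the LP by guessing the optimal active set and certifying it with strong duality. A short sensitivity argument shows that $A=E=0$ should be optimal (raising either must be compensated by lowering $B$ under the normalization, which strictly increases the maximum of the three candidates). Equating all three candidate inequalities then forces $D=C$ and $B=\tfrac{1+2\Delta_r}{1+\Delta_r}C$, and substitution gives the ratio $\frac{(1+\Delta_r)(1+2\Delta_r)}{1+4\Delta_r+2\Delta_r^2} = \frac{1+3\Delta_r+2\Delta_r^2}{1+4\Delta_r+2\Delta_r^2}$. For the matching tight family I would take $c_1=M$, $c_2=2M-2\eta$ for small $\eta>0$, and four items: one of profit $1+2\Delta_r$ and weight $M$; two of profit $1+\Delta_r$ and weight $M-\eta$; and one of profit $1-\varepsilon$ and weight $2\eta$. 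Routine verification shows that \KPone is the first item, \KPsec is the pair of weight-$(M-\eta)$ items, and the three candidates of $H_{T2}$ all evaluate to $1+3\Delta_r+2\Delta_r^2$, whereas packing item~$1$ in period~$1$ and adding item~$4$ in period~$2$ yields $z^* \geq 1+4\Delta_r+2\Delta_r^2-\Delta_r\varepsilon$; the ratio approaches the LP bound as $\varepsilon \to 0$.

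The principal obstacle will be the active-set analysis of the LP: one must verify that neither of the KP-optimality inequalities $B \geq C$ and $C+D \geq B$ can be slack at the worst case and that setting $A>0$ or $E>0$ cannot lower $h$, rather than just check the particular configuration I guessed. A secondary subtlety lies in the tight construction: the perturbation $\varepsilon>0$ is essential, since otherwise the bundle $\{1,4\}$ would tie with $\{2,3\}$ for \KPsec and then solution~(\ref{2Sol}) could accidentally recover $z^*$, collapsing the worst-case ratio to $1$.
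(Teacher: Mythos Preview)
Your proposal is correct and follows essentially the same LP-based route as the paper: encode the three candidate solutions and the upper bound $\Delta_1 z_1+\Delta_2 z_2\ge z^*$ as a parametric LP, extract the minimum ratio, and exhibit tight instances.

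Two small execution differences are worth noting. First, the extra KP-optimality constraints $B\ge C$ and $C+D\ge B$ that you introduce turn out to be slack at the optimum and hence unnecessary; the paper omits them and certifies the optimum directly via an explicit primal/dual pair and strong duality (its optimal configuration has $A>0,\,C=0$, whereas yours has $A=0,\,C>0$ --- the LP simply has multiple optima). This duality certificate is cleaner than the sensitivity argument you sketch and sidesteps exactly the ``active-set analysis'' obstacle you flag. Second, your single four-item tight family works for all $\Delta_r>0$ because no two items fit together in $c_1$, which forces $Sol(KP_1)=\{1\}$ regardless of $\Delta_r$; the paper instead splits into two five-item instances for $\Delta_r\le 1$ and $\Delta_r>1$, so your construction is in fact a bit more economical.
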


\begin{proof}
In order to evaluate the worst case performance of algorithm $H_{T2}$, we consider an LP formulation where we associate a non--negative variable $h$ with the solution value computed by the algorithm. 
In addition, the profits of the subsets $p(S_{(\cdot)})$ are associated with non--negative variables $s_{(\cdot)}$. 
As in Section \ref{sec:HarmNumRatios}, the positive parameter $OPT$ represents $z^*$. This implies the following parametric LP model with $\Delta_1$ and $\Delta_2$:
\begin{align}
   \text{minimize}\quad & h \label{eq:ObjT2}\\
	\text{subject to}\quad
	& h - (( \Delta_1 +  \Delta_2)(s_{12} + s_1) + \Delta_2 s_2^b)\geq 0 \label{eq:AlgoH1}\\
	& h - (( \Delta_1 +  \Delta_2)(s_{12} + s_2^a) + \Delta_2(s_2^\prime + s_2^b)) \geq 0 \label{eq:AlgoH2}\\
	& h - (( \Delta_1 +  \Delta_2)s_2^\prime + \Delta_2( s_{12} + s_2^a + s_2^b))  \geq 0  \label{eq:AlgoH3}\\
	& \Delta_1(s_{12} + s_1) + \Delta_2(s_{12} + s_2^a + s_2^\prime + s_2^b) \geq OPT \label{eq:UBAlgoH}\\
	& h, s_{12}, s_1, s_2^a, s_2^\prime, s_2^b \geq 0 \label{eq:AlgoHprof}
\end{align}
The objective function value (\ref{eq:ObjT2}) represents a lower bound on the 
worst case performance of algorithm $H_{T2}$. 
Constraints (\ref{eq:AlgoH1})--(\ref{eq:AlgoH3}) guarantee that $H_{T2}$ will select the best of the three feasible solutions. 
Constraint (\ref{eq:UBAlgoH}) states that the sum $\Delta_1 z_1 + \Delta_2 z_2$ constitutes an upper bound on $z^*$. 
Constraints (\ref{eq:AlgoHprof}) 
 indicate that the variables are non-negative. A feasible solution of model (\ref{eq:ObjT2})--(\ref{eq:AlgoHprof}), for any $\Delta_1 > 0$ and $\Delta_2 > 0$, is: 
\begin{align}
& h= \frac{(1 + 3\Delta_r + 2\Delta_r^2)OPT}{1 + 4\Delta_r + 2\Delta_r^2},\, s_1 = \frac{\Delta_r OPT}{\Delta_1(1 + 4\Delta_r + 2\Delta_r^2)} \nonumber \\
& s_{12} = s_2^\prime= \frac{(1 + \Delta_r) OPT}{\Delta_1(1 + 4\Delta_r + 2\Delta_r^2)},\, s_2^a = s_2^b = 0. \nonumber
\end{align}
We will show by strong duality that such a solution is also optimal for any positive value of $\Delta_1$ and $\Delta_2$. If we denote by $\lambda_i$ with $i=1, \dots, 4$ the dual variables related to constraints (\ref{eq:AlgoH1})--(\ref{eq:UBAlgoH}), the dual linear problem corresponding to model (\ref{eq:ObjT2})--(\ref{eq:AlgoHprof}) is as follows:
\begin{align}
   \text{maximize}\quad & OPT \cdot \lambda_4 \label{eq:ObjDualT2}\\
	\text{subject to}\quad
	& \lambda_1 + \lambda_2 + \lambda_3 \leq 1 \label{eq:Dualh}\\
	& \Delta_1\lambda_4 - (\Delta_1 + \Delta_2)\lambda_1 \leq 0 \label{eq:Duals1}\\
	& -\Delta_2\lambda_3 - (\Delta_1 + \Delta_2)(\lambda_1 + \lambda_2 - \lambda_4) \leq 0 \label{eq:Duals12}\\
	& \Delta_2(\lambda_4 - \lambda_3) - (\Delta_1 + \Delta_2)\lambda_2 \leq 0 \label{eq:Duals2a}\\
	& \Delta_2(\lambda_4 - \lambda_2) - (\Delta_1 + \Delta_2)\lambda_3 \leq 0 \label{eq:Duals2prime}\\
	& - \Delta_2(\lambda_1 + \lambda_2 + \lambda_3 - \lambda_4) \leq 0 \label{eq:Duals2b}\\
	& \lambda_1, \lambda_2, \lambda_3, \lambda_4 \geq 0 \label{eq:DualvarT2}
\end{align}
Constraints (\ref{eq:Dualh})--(\ref{eq:Duals2b}) correspond to primal variables $h, s_1, s_{12}, s_2^a, s_2^\prime, s_2^b$ respectively.
A feasible dual solution (for any $\Delta_1 > 0$ and $\Delta_2 > 0$) reads 
\begin{align}
& \lambda_1 = \frac{1 + 2\Delta_r}{1 + 4\Delta_r + 2\Delta_r^2},\, \lambda_2 = \lambda_3 = \frac{(1 + \Delta_r) \Delta_r}{1 + 4\Delta_r + 2\Delta_r^2},\, \lambda_4 = \frac{1 + 3\Delta_r + 2\Delta_r^2}{1 + 4\Delta_r + 2\Delta_r^2}. \nonumber 
\end{align}
Hence, the dual solution value $OPT\cdot \lambda_4 = h$ proves by strong duality the optimality of the primal solution for any value of  the two time multipliers. The corresponding lower bound on the performance ratio provided by algorithm $H_{T2}$ is equal to $\frac{h}{OPT} =  \frac{1 + 3\Delta_r + 2\Delta_r^2}{1 + 4\Delta_r + 2\Delta_r^2}$. 

\smallskip
We can show the tightness of the bound by considering two different instances for $\Delta_r \leq 1$ and $\Delta_r > 1$ respectively, with $n=5$, $c_1 = 3 + \gamma$, $c_2 = 4$ 
and following entries: 
\begin{table}[H]
	\centering
\begin{tabular}{|l|l|ccccc|}
  \hline
& $i$ & 1 & 2 & 3 & 4 & 5 \\ \hline
$\Delta_r \leq 1$ & $p_i$ & $\frac{1+\Delta_r}{\Delta_1}$ & $\frac{1+\Delta_r}{\Delta_1}$ & $\frac{\Delta_r}{\Delta_1}$ & $\frac{1+2\Delta_r -\gamma}{\Delta_1}$ & $\frac{1}{\Delta_1}$ \\
& $w_i$ & $2$ & $2$ & $1 + \gamma$ &  $3 - 2\gamma$ &  $1 + 2\gamma$ \\ \hline \hline
$\Delta_r > 1$ & $p_i$ & $\frac{1+2\Delta_r}{\Delta_1}$ & $\frac{1+\Delta_r}{\Delta_1}$ & $\frac{1+ \Delta_r}{\Delta_1}$ & $\frac{\Delta_r -\gamma}{\Delta_1}$ & $\frac{1}{\Delta_1}$ \\
& $w_i$ & $3 + \gamma$ & $2$ & $2$ &  $1$ &  $1$ \\ \hline
\end{tabular}
\end{table}
First, consider the instance for the case $\Delta_r \leq 1$. The optimal solutions of $KP_1$ and $KP_2$ consist of items 1, 3 and items 1, 2 respectively. Correspondingly, we have $$S_{12} = \{ 1 \}, S_1 = \{ 3 \}, S_2^a = \{ \emptyset \}, S_2^\prime = \{ 2 \}, S_2^b = \{ \emptyset \}.$$
An easy computation reveals that all the solutions provided by algorithm $H_{T2}$ have the same profit value of $1+ 3\Delta_r + 2\Delta_r^2$.
The optimal $\mbox{IKP}^\prime$ solution selects item 4 in the first period together with item 5 in the second one, hence $z^* = 1+ 4\Delta_r + 2\Delta_r^2 - (1 + \Delta_r)\gamma$. \\
In the second instance with $\Delta_r > 1$, we have $Sol(KP_1) = \{ 1 \}, Sol(KP_2) = \{ 2, 3 \}$ and thus $$S_{12} = \{ \emptyset \}, S_1 = \{ 1 \}, S_2^a = \{ 2 \}, S_2^\prime = \{ 3 \}, S_2^b = \{ \emptyset \}.$$
Also in this case all solutions provided by the algorithm have a profit equal to $1+ 3\Delta_r + 2\Delta_r^2$. An optimal solution packs items 2 and 4 in the first period together with item 5 in the second one, hence $z^* = 1+ 4\Delta_r + 2\Delta_r^2 - (1 + \Delta_r)\gamma$. \\
In both cases, the approximation ratio of algorithm $H_{T2}$ can be arbitrarily close to ratio $\frac{1 + 3\Delta_r + 2\Delta_r^2}{1 + 4\Delta_r + 2\Delta_r^2}$ as the value of $\gamma$ goes to 0.
\end{proof}

\begin{coro}
\label{Coro:T2_1}
The approximation ratio of algorithm $H_{T2}$ is bounded from below by $\frac{2\sqrt{2} + 3}{2\sqrt{2} + 4}= \frac{1}{2} + \frac{\sqrt{2}}{4}$. When $H_{T2}$ is applied to $\mbox{IIKP}^\prime$,
the ratio is $\frac{6}{7}$.
\end{coro}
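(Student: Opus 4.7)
The plan is to treat the approximation ratio from Theorem~\ref{Theo:T=2} as a function of a single real variable $\Delta_r > 0$ and minimize it. Writing $f(\Delta_r) = \frac{1 + 3\Delta_r + 2\Delta_r^2}{1 + 4\Delta_r + 2\Delta_r^2}$, the key reformulation I would use is to split the fraction as
$$f(\Delta_r) = 1 - \frac{\Delta_r}{1 + 4\Delta_r + 2\Delta_r^2},$$
so that minimizing $f$ reduces to maximizing the correction term $g(\Delta_r) = \frac{\Delta_r}{1 + 4\Delta_r + 2\Delta_r^2}$.

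Next I would factor $\Delta_r$ out of the denominator to write $g(\Delta_r) = \frac{1}{\frac{1}{\Delta_r} + 4 + 2\Delta_r}$. Maximizing $g$ is therefore equivalent to minimizing the positive quantity $\frac{1}{\Delta_r} + 2\Delta_r$. This is the routine step I expect to be easiest: by the AM--GM inequality (or by setting the derivative to zero) one gets
$$\tfrac{1}{\Delta_r} + 2\Delta_r \geq 2\sqrt{2},$$
with equality at $\Delta_r = \tfrac{1}{\sqrt{2}}$. Hence $g$ attains its maximum value $\frac{1}{4 + 2\sqrt{2}}$ at this point.

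Substituting back gives the global minimum
$$\min_{\Delta_r > 0} f(\Delta_r) = 1 - \frac{1}{4 + 2\sqrt{2}} = \frac{3 + 2\sqrt{2}}{4 + 2\sqrt{2}}.$$
The only remaining bookkeeping step is to rationalize: multiplying numerator and denominator by $4 - 2\sqrt{2}$ yields $\frac{4 + 2\sqrt{2}}{8} = \frac{1}{2} + \frac{\sqrt{2}}{4}$, which matches the claimed lower bound. For the $\mbox{IIKP}^\prime$ special case, the time multipliers are unit, so $\Delta_r = 1$ and direct substitution into $f$ gives $f(1) = \frac{6}{7}$.

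I do not foresee a real obstacle here; the only mildly delicate point is recognizing that the single-variable optimization collapses via the $\frac{1}{\Delta_r} + 2\Delta_r$ reformulation, after which AM--GM gives the answer immediately without any case analysis on whether $\Delta_r \leq 1$ or $\Delta_r > 1$.
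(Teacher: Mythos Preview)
Your argument is correct and follows essentially the same route as the paper: both minimize the ratio $f(\Delta_r)=\frac{1+3\Delta_r+2\Delta_r^2}{1+4\Delta_r+2\Delta_r^2}$ from Theorem~\ref{Theo:T=2} over $\Delta_r>0$, locate the minimizer at $\Delta_r=\tfrac{\sqrt{2}}{2}$, and then evaluate. The paper simply asserts this as a ``straightforward computation'' (implicitly via calculus), whereas you supply the details by rewriting $f=1-g$ and handling $g$ via the AM--GM inequality on $\tfrac{1}{\Delta_r}+2\Delta_r$; this is a clean way to certify the global minimum without checking second-order conditions, but it is not a genuinely different approach.
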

\begin{proof}
A straightforward computation reveals that function $\frac{1 + 3\Delta_r + 2\Delta_r^2}{1 + 4\Delta_r + 2\Delta_r^2}$  has a unique global minimum (for $\Delta_r > 0$) at  $\Delta_r = \frac{\sqrt{2}}{2}$. Hence, we get that the ratio $\frac{h}{OPT}$ is bounded from below by $\frac{2\sqrt{2} + 3}{2\sqrt{2} + 4}$. For $\mbox{IIKP}^\prime$, we have $\Delta_r =1$. Correspondingly,
$\frac{1+3\Delta_r+2\Delta^2_r}{1+4\Delta_r+2\Delta^2_r}
=\frac{6}{7}$.
\end{proof}

Considering algorithm $H_{T2}$ the following improvement comes to mind:
Solve to optimality $KP_2$ after packing the item set \KPone in both periods
(as in $H_2$) and, as an alternative, optimally solve $KP_1$ restricted to the item set \KPsec. 
However, the above tight instance would also apply to this computationally more demanding variation and thus no improvement of the approximation ratio can be gained.

Algorithm $H_{T2}$ is not a polynomial time algorithm since it requires the optimal solutions of $KP_1$ and $KP_2$. 
We could solve these knapsack problems by an $\eps$--approximation scheme (FPTAS) to get a polynomial running time at the cost of a decrease of the approximation ratio. 
The following corollary shows that the ratio is decreased by a factor $(1 - \eps)$.
\begin{coro}
\label{Coro:T2}
If an $\eps$--approximation scheme is employed for solving the standard knapsack problems $KP_1$ and $KP_2$, the approximation ratio of algorithm $H_{T2}$ is at least $ \frac{1 + 3\Delta_r + 2\Delta_r^2}{1 + 4\Delta_r + 2\Delta_r^2}(1-\eps)$.
\end{coro}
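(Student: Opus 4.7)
The plan is to mirror the LP-based analysis of Theorem \ref{Theo:T=2}, replacing the exact knapsack solutions with their $\eps$-approximations and then tracking how the factor $(1-\eps)$ propagates through the model. Let $\tilde{z}_1, \tilde{z}_2$ denote the profits returned by the FPTAS for $KP_1$ and $KP_2$, so that $\tilde{z}_t \geq (1-\eps)z_t$ for $t=1,2$, and let $\tilde{Sol}(KP_1), \tilde{Sol}(KP_2)$ be the corresponding feasible item sets. First I would decompose these two sets exactly as in Figure \ref{FigureSi}, producing subsets $\tilde{S}_{12}, \tilde{S}_1, \tilde{S}_2^a, \tilde{S}_2^\prime, \tilde{S}_2^b$. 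Since $\tilde{Sol}(KP_1)$ and $\tilde{Sol}(KP_2)$ are feasible knapsack solutions for capacities $c_1$ and $c_2$ respectively, the weight inequalities (\ref{CaseW1})--(\ref{CaseW5}) remain valid for the tilde subsets, so the three candidate packings a), b), c) are still feasible for $\mbox{IKP}^\prime$.

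Next, I would re-derive the LP model (\ref{eq:ObjT2})--(\ref{eq:AlgoHprof}) by reinterpreting the variables $s_{(\cdot)}$ as the profits $p(\tilde{S}_{(\cdot)})$. The constraints (\ref{eq:AlgoH1})--(\ref{eq:AlgoH3}), which only encode that $H_{T2}$ picks the maximum of the three feasible candidates, are unaffected by the use of approximated subsets. The sole modification concerns the upper bound constraint (\ref{eq:UBAlgoH}). Since $z^* \leq \Delta_1 z_1 + \Delta_2 z_2$ and $\tilde{z}_t \geq (1-\eps)z_t$, we obtain
\begin{equation*}
\Delta_1(s_{12}+s_1) + \Delta_2(s_{12}+s_2^a+s_2^\prime+s_2^b) \;=\; \Delta_1 \tilde{z}_1 + \Delta_2 \tilde{z}_2 \;\geq\; (1-\eps)\,OPT.
\end{equation*}

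Finally, the only difference between the new LP and the one analyzed in the proof of Theorem \ref{Theo:T=2} is that the right-hand side of (\ref{eq:UBAlgoH}) is replaced by $(1-\eps)OPT$. Since this constraint is the only one containing $OPT$ and is tight at optimality (as witnessed by strong duality in the original argument), rescaling $OPT$ by $(1-\eps)$ rescales the optimal value of $h$ by the same factor. Invoking the exact-case optimal value $h = \frac{1+3\Delta_r+2\Delta_r^2}{1+4\Delta_r+2\Delta_r^2}\,OPT$ established in Theorem \ref{Theo:T=2} immediately yields the claimed bound $\frac{1+3\Delta_r+2\Delta_r^2}{1+4\Delta_r+2\Delta_r^2}(1-\eps)$ on the approximation ratio.

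The main (minor) obstacle is to verify cleanly that the three constructed solutions remain feasible when built from FPTAS outputs rather than exact optima; this is actually automatic because any FPTAS returns a feasible packing. Everything else is a rescaling argument on an LP whose structure has already been characterized, so no further case analysis or duality computation is required.
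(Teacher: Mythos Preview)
Your proposal is correct and follows essentially the same route as the paper: reinterpret the subsets $S_{(\cdot)}$ as coming from the FPTAS packings, observe that constraints (\ref{eq:AlgoH1})--(\ref{eq:AlgoH3}) still hold because they only rely on feasibility of the packings, replace (\ref{eq:UBAlgoH}) by its $(1-\eps)$-scaled version, and note that the LP optimum (and hence $h/OPT$) scales by $(1-\eps)$ via the same primal/dual pair. The paper states this slightly more tersely, but the argument is identical.
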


\begin{proof}
We consider again model (\ref{eq:ObjT2})--(\ref{eq:AlgoHprof}) to evaluate the performance of the algorithm. If the subsets $S_{(\cdot)}$ 
correspond to the items of the $\eps$--approximations for $KP_1$ and $KP_2$, constraints (\ref{eq:AlgoH1})--(\ref{eq:AlgoH3}) straightforwardly hold.
Then, we just replace constraint (\ref{eq:UBAlgoH}) by constraint
\begin{equation}
\label{eq:epsUBAlgoH}
 \Delta_1(s_{12} + s_1) + \Delta_2(s_{12} + s_2^a + s_2^\prime + s_2^b) \geq OPT(1-\eps)
\end{equation}
which indicates that the weighted sum of the approximate solutions divided by $(1-\eps)$ provides an upper bound on the optimal solution value of $\mbox{IKP}$. In the related primal/dual LP models, the optimal values of primal variables are now multiplied by $(1-\eps)$ while the optimal dual solution remains the same with objective value $(1-\eps)OPT \cdot \lambda_4$. The corresponding value of the ratio $\frac{h}{OPT}$ shows the bound. 
\end{proof}

\section{Conclusions}
\label{TheConcl}
We proposed for $\mbox{IKP}$ a series of results extending in different directions the contributions currently available in the literature. We proved the tightness of the approximation ratio of a general purpose algorithm presented in \cite{HaSh06} and originally applied to the time-invariant version of the problem. We also established a polynomial time approximation scheme (PTAS) when one of the problem inputs can be considered as a constant. We then focused on a restricted relevant variant of $\mbox{IKP}$ which plausibly assumes the possible packing of any item since the first period. We discussed the performance of different approximation algorithms for the general case as well as for the variation with two time periods. In future research, we will investigate extensions of our procedures to the design of improving approximation algorithms for variants involving more than two periods. Also, since to the authors' knowledge no computational experience has been provided for $\mbox{IKP}$ so far, it would also be interesting to derive new solution approaches and test their performance after generating benchmarks and challenging to solve instances. 

\subsection*{Acknowledgments}

Ulrich Pferschy was supported by the project ``Choice-Selection-Decision" and by the COLIBRI Initiative of the University of Graz.

 \bibliographystyle{elsarticle-harv} 
 \bibliography{ref8}

\begin{thebibliography}{24}
\expandafter\ifx\csname natexlab\endcsname\relax\def\natexlab#1{#1}\fi
\expandafter\ifx\csname url\endcsname\relax
  \def\url#1{\texttt{#1}}\fi
\expandafter\ifx\csname urlprefix\endcsname\relax\def\urlprefix{URL }\fi

\bibitem[{Abolhassani et~al.(2016)Abolhassani, Chan, Chen, Esfandiari,
  Hajiaghayi, Hamid, and Wu}]{ACCEHMW16}
Abolhassani, M., Chan, T.~H., Chen, F., Esfandiari, H., Hajiaghayi, M., Hamid,
  M., Wu, X., 2016. {Beating Ratio 0.5 for Weighted Oblivious Matching
  Problems}. In: Sankowski, P., Zaroliagis, C. (Eds.), 24th Annual European
  Symposium on Algorithms (ESA 2016). Vol.~57. Schloss
  Dagstuhl--Leibniz-Zentrum fuer Informatik, pp. 3:1--3:18.

\bibitem[{Bienstock et~al.(2013)Bienstock, Sethuraman, and Ye}]{BiSeYe13}
Bienstock, D., Sethuraman, J., Ye, C., 2013. Approximation algorithms for the
  incremental knapsack problem via disjunctive programming. Available in: {\tt
  arXiv:1311.4563}.

\bibitem[{Caprara et~al.(2016)Caprara, Furini, Malaguti, and Traversi}]{CFMT16}
Caprara, A., Furini, F., Malaguti, E., Traversi, E., 2016. Solving the temporal
  knapsack problem via recursive {D}antzig-{W}olfe reformulation.
  \textit{Information Processing Letters} 116, 379--386.

\bibitem[{Caprara et~al.(2000)Caprara, Kellerer, Pferschy, and
  Pisinger}]{CaKePfPi00}
Caprara, A., Kellerer, H., Pferschy, U., Pisinger, D., 2000. Approximation
  algorithms for knapsack problems with cardinality constraints.
  \textit{European Journal of Operational Research} 123, 333--345.

\bibitem[{Chimani and Wiedera(2016)}]{CW16}
Chimani, M., Wiedera, T., 2016. {An ILP-based Proof System for the Crossing
  Number Problem}. In: Sankowski, P., Zaroliagis, C. (Eds.), 24th Annual
  European Symposium on Algorithms (ESA 2016). Vol.~57. Schloss
  Dagstuhl--Leibniz-Zentrum fuer Informatik, pp. 29:1--29:13.

\bibitem[{{Della Croce} et~al.((forthcoming){\natexlab{a}}){Della Croce},
  Pferschy, and Scatamacchia}]{DEPFSC17}
{Della Croce}, F., Pferschy, U., Scatamacchia, R., (forthcoming){\natexlab{a}}.
  Approximation results for the incremental knapsack problem. In: 28th
  International Workshop on Combinatorial Algorithms (IWOCA 2017).

\bibitem[{{Della Croce} et~al.((forthcoming){\natexlab{b}}){Della Croce},
  Pferschy, and Scatamacchia}]{DCPFSC17}
{Della Croce}, F., Pferschy, U., Scatamacchia, R., (forthcoming){\natexlab{b}}.
  New exact approaches and approximation results for the penalized knapsack
  problem. \textit{Discrete Applied Mathematics}\mbox{.
  doi}:\url{10.1016/j.dam.2017.11.023}.

\bibitem[{{Della Croce} et~al.(2017{\natexlab{a}}){Della Croce}, Salassa, and
  Scatamacchia}]{SDSS17}
{Della Croce}, F., Salassa, F., Scatamacchia, R., 2017{\natexlab{a}}. An exact
  approach for the 0--1 knapsack problem with setups. \textit{Computers \&
  Operations Research} 80, 61--67.

\bibitem[{{Della Croce} et~al.(2017{\natexlab{b}}){Della Croce}, Salassa, and
  Scatamacchia}]{CDSS17}
{Della Croce}, F., Salassa, F., Scatamacchia, R., 2017{\natexlab{b}}. A new
  exact approach for the 0--1 collapsing knapsack problem. \textit{European
  Journal of Operational Research} 260, 56--69.

\bibitem[{Dudzi\'{n}ski and Walukiewicz(1985)}]{duwa85}
Dudzi\'{n}ski, K., Walukiewicz, S., 1985. On the multiperiod binary knapsack
  problem. Methods of Operations Research 49, 223--232.

\bibitem[{Faaland(1981)}]{faa81}
Faaland, B., 1981. The multiperiod knapsack problem. Operations Research 29,
  612--616.

\bibitem[{Faenza and Malinovi\'c(2017)}]{FaMa17}
Faenza, Y., Malinovi\'c, I., 2017. Improved approximation algorithms and
  disjunctive relaxations for knapsack problems. Available in: {\tt
  arXiv:1701.07299}.

\bibitem[{Furini et~al.(2018)Furini, Monaci, and Traversi}]{FUMOTR18}
Furini, F., Monaci, M., Traversi, E., 2018. Exact approaches for the knapsack
  problem with setups. \textit{Computers \& Operations Research} 90, 208--220.

\bibitem[{Giudici et~al.(2017)Giudici, Halffmann, Ruzika, and Thielen}]{GHRT17}
Giudici, A., Halffmann, P., Ruzika, S., Thielen, C., 2017. Approximation
  schemes for the parametric knapsack problem. \textit{Information Processing
  Letters} 120, 11--15.

\bibitem[{Hartline(2008)}]{hart08}
Hartline, J., 2008. Incremental optimization. Ph.D. thesis, Cornell University.

\bibitem[{Hartline and Sharp(2006)}]{HaSh06}
Hartline, J., Sharp, A., 2006. An incremental model for combinatorial
  maximization problems. In: {\`A}lvarez, C., Serna, M.~J. (Eds.),
  \textit{WEA}, volume 4007 of \textit{Lecture Notes in Computer Science}.
  Springer, pp. 36--48.

\bibitem[{He et~al.(2016)He, Wang, He, Zhao, and Li}]{CXYSW16}
He, Y., Wang, X., He, Y., Zhao, S., Li, W., 2016. Exact and approximate
  algorithms for discounted {0-1} knapsack problem. \textit{Information
  Sciences} 369, 634--647.

\bibitem[{Kellerer and Pferschy(2004)}]{KelPfe04}
Kellerer, H., Pferschy, U., 2004. Improved dynamic programming in connection
  with an {FPTAS} for the knapsack problem. \textit{Journal of Combinatorial
  Optimization} 8, 5--11.

\bibitem[{Kellerer et~al.(2004)Kellerer, Pferschy, and Pisinger}]{KePfPi04}
Kellerer, H., Pferschy, U., Pisinger, D., 2004. Knapsack Problems. Springer.

\bibitem[{Martello and Toth(1990)}]{MarTot90}
Martello, S., Toth, P., 1990. Knapsack problems: algorithms and computer
  implementations. Wiley.

\bibitem[{Pferschy and Scatamacchia(2018)}]{PS17}
Pferschy, U., Scatamacchia, R., 2018. Improved dynamic programming and
  approximation results for the knapsack problem with setups.
  \textit{International Transactions in Operational Research} 25, 667--682.

\bibitem[{Sahni(1975)}]{Sahni75}
Sahni, S., 1975. Approximate algorithms for the 0--1 knapsack problem.
  \textit{Journal of the ACM} 22, 115--124.

\bibitem[{Sharp(2007)}]{sha07}
Sharp, A., 2007. Incremental algorithms: Solving problems in a changing world.
  Ph.D. thesis, Cornell University.

\bibitem[{Thielen et~al.(2016)Thielen, Tiedemann, and Westphal}]{Thtiwe16}
Thielen, C., Tiedemann, M., Westphal, S., 2016. The online knapsack problem
  with incremental capacity. \textit{Mathematical Methods of Operations
  Research} 83, 207--242.

\end{thebibliography}
 \label{sec:bib}

\end{document}